\newcommand{\cC}{\ensuremath{\mathcal{C}}}
\newcommand{\cF}{\ensuremath{\mathcal{F}}}
\newcommand{\F}{{\mathbb F}}
\newcommand{\bX}{{\mathbf{X}}}
\newcommand{\cmp}[2]{\begin{minipage}{#1cm}\begin{center}#2\end{center}\end{minipage}}
\newcommand{\inabs}[1]{\left|#1\right|}
\newcommand{\inset}[1]{\left\{#1\right\}}
\newcommand{\inparen}[1]{\left(#1\right)}
\newcommand{\suchthat}{\,:\,}
\renewcommand{\subset}{\subseteq}
\newcommand{\spn}{\ensuremath{\operatorname{span}}}
\newcommand{\tr}{\mathrm{tr}}
\newcommand{\ind}[1]{\ensuremath{\mathbf{1}_{#1}}}
\newcommand{\eps}{\varepsilon}
\renewcommand{\epsilon}{\varepsilon}
\newcommand{\TODO}[1]{\textcolor{red}{\textbf{TODO: #1}}}
\newtheorem{theorem}{Theorem} 
\newtheorem{definition}{Definition} 
\newtheorem{obs}[theorem]{Observation} 
\newtheorem{cor}[theorem]{Corollary} 
\newtheorem{remark}{Remark}
\newtheorem{proposition}[theorem]{Proposition}
\newtheorem*{theorem*}{Theorem}  
\numberwithin{equation}{section}
\title{Repairing Reed-Solomon Codes}
\author{Venkatesan Guruswami\thanks{Carnegie Mellon University, Computer Science Department. Research funded in part by NSF grant CCF-1422045. \texttt{guruswami@cmu.edu}} \ \ and \ \ Mary Wootters\thanks{Carnegie Mellon University, Computer Science Department.  Research funded in part by NSF MSPRF grant
DMS-1400558.  \texttt{marykw@cs.cmu.edu}}}
\begin{document}
\maketitle
\begin{abstract}
We study the performance of Reed-Solomon (RS) codes for the \em exact repair problem \em in distributed storage.
Our main result is that, in some parameter regimes, Reed-Solomon codes are optimal regenerating codes, among MDS codes with linear repair schemes.  Moreover, we give a characterization of MDS codes with linear repair schemes which holds in any parameter regime, and which can be used to give non-trivial repair schemes for RS codes in other settings.

More precisely, we show that for $k$-dimensional RS codes whose evaluation points are a finite field of size $n$, there are exact repair schemes with bandwidth $(n-1)\log((n-1)/(n-k))$ bits, and that this is optimal for any MDS code with a linear repair scheme.  In contrast, the naive (commonly implemented) repair algorithm for this RS code has bandwidth $k\log(n)$ bits.
When the entire field is used as evaluation points, the number of nodes $n$ is much larger than the number of bits per node (which is $O(\log(n))$), and so
this result holds only when the degree of sub-packetization is small.  
However, our method applies in any parameter regime, and to illustrate this for high levels of sub-packetization we give an improved repair scheme for a specific (14,10)-RS code used in the Facebook Hadoop Analytics cluster.

\end{abstract}

\section{Introduction}\label{sec:intro}
This paper studies a polynomial interpolation problem which arises from the use 
of Reed-Solomon codes in distributed storage systems.
In such systems, a large file is encoded and distributed over
many \em nodes. \em  When a node fails, we would like to be able to set up a
 replacement node efficiently using information from the remaining
functional nodes.  The problem of recovering the failed node exactly is known
as the \em exact repair problem. \em 

One traditional solution to the exact repair problem has been to use \em maximum-distance-separating \em (MDS) codes, and in particular \em Reed-Solomon \em (RS) codes.
The RS solution goes as follows.  The original file is broken up into $k$ blocks, and each block is viewed as an element of a finite field $F$.
We interpret the file as a degree $k-1$ polynomial $f$ over $F$: each block is a coefficient of the polynomial.  To distribute the file over the nodes, we
choose $n$ points $\alpha_1,\ldots, \alpha_n \in F$, and send $f(\alpha_i)$ to node $i$.  
Now, if a node fails, we may recover it by looking at the information on any $k$ remaining nodes.  This follows because any $k$ evaluations $f(\alpha_i)$ of a degree $k-1$ polynomial determine the entire polynomial, and hence the contents of the failed node.

This is a non-trivial solution to the exact repair problem, but it's not ideal.  Unfortunately for Reed-Solomon codes, $k$ nodes are also \em necessary \em in this framework, even if all we want to recover is the single failed node.
This is wasteful: we have to read $k$ symbols from $F$ when we only want one. 

Recently, a new approach has emerged, using \em regenerating codes. \em  In this framework, we still use an MDS code to encode the file onto $n$ nodes.  However, a replacement node may choose to download only \em part \em of the contents of each surviving node, rather than being required to download the entire node.  That is, we break up the symbols from $F$ into $t$ sub-symbols in some smaller field $B$ (for example, $B = \inset{0,1}$).
The node is allowed to do some local computation and return one or more sub-symbols, and our goal is to download as few sub-symbols as possible. It turns out that one can do significantly better by downloading fewer sub-symbols from more nodes with than the traditional solution of downloading all of the sub-symbols from each of $k$ nodes.  The number of sub-symbols downloaded in the worst case is called the (exact) \em repair bandwidth \em of the code, over $B$.
The exact repair problem and regenerating codes were first introduced in~\cite{DGWW10}, and have seen a great deal of work since then.  See~\cite{survey} for an excellent survey, and the University of Texas distributed storage wiki~\cite{wiki} for more up-to-date information and references.  

Reed-Solomon codes are a bit maligned in the regenerating codes literature.  A typical paper on regenerating codes---including this one---will mention within the first few paragraphs why the traditional approach with Reed-Solomon codes is not a good idea for the exact repair problem.  Indeed, by now we know of several (non-RS) MDS codes which outperform the traditional RS approach. 
Nonetheless, RS codes are still used!  Because of their ubiquity, it is important to understand what can and cannot be done with Reed-Solomon codes; just because the traditional RS approach isn't a good idea, that does not mean that there isn't some better way to use RS codes.  This was asked as an open question in~\cite{survey}, and is the subject of this paper.

\paragraph{Our contributions.}
We study the exact repair problem for Reed-Solomon codes, and show that one can do much better than the naive scheme.
In fact, we show that 
high (constant) rate Reed-Solomon codes with evaluation points in the whole field are optimal among linear schemes.  Moreover, to the best of our knowledge, they
can significantly outperform all existing constructions in the same parameter regime.
Subsequent work by Ye and Barg~\cite{YB16} has used our
framework to give constructions of RS codes which are optimal in other parameter regimes as well (those with high sub-packetization).

More precisely, our contributions are as follows.
\begin{enumerate}
	\item[(1)] \textbf{Exact repair schemes for high-rate Reed-Solomon codes.}  We show that the repair bandwidth for a rate-$(1 - \eps)$ Reed-Solomon code over a field $F$ and length $n = |F|$ is at most $(n-1)$, over a subfield $B$ of size $1/\eps$, for infinitely many $(n,\eps)$.  In particular, this implies that we can solve the exact repair problem over $GF(2)$ for high-rate Reed-Solomon codes with repair bandwidth $O(n)$ \em bits. \em  Previous constructions of MDS codes with a similar degree of sub-packetization (that is, the number of bits per symbol) require bandwidth $\Omega(n\log(n))$ bits.
	\item[(2)] \textbf{A matching lower bound.} It is easy to see that $k = (1 - \eps)n$ is a lower bound for the repair bandwidth for any MDS code in this setting, and thus our scheme is optimal up to constant factors.  However, we can prove an even stronger lower bound for linear schemes.  We show that our scheme in (1) is optimal for linear repair schemes for MDS codes, even up to the leading constant.  

	\item[(3)] \textbf{A characterization of linear repair schemes, with examples.} We give a characterization for linear exact repair schemes of MDS codes, in terms of the dual code.  For Reed-Solomon codes, (whose dual is again a generalized Reed-Solomon code), this gives a very natural way to think about constructing and analyzing codes.  Indeed, it is through this characterization that we prove (1) and (2).  Moreover, this characterization is useful to construct schemes for arbitrary RS codes.  
We give two further examples of applications to RS codes, beyond (1) and (2).  In the first, we give a non-trivial construction for a family of RS codes where the length $n$ of the code is small compared to $F$; allowing for larger field sizes can add flexibility in practice.  In our second example, we consider a specific RS code, used in the Facebook Analytics Hadoop cluster.  This particular code has been analyzed before~\cite{xoringelephants, scalarmds}, and using our characterization we are able to find a repair scheme (with the help of a computer) that out-performs the best known repair scheme for this code.

We remark that in subsequent work,~\cite{YB16} has used our characterization to find RS codes in this parameter regime (where $n$ is small compared to $F$) which are optimal regenerating codes.
\end{enumerate}
To the best of our knowledge, ours is the first work to systematically study the repair bandwidth of RS codes for general $k$.  In~\cite{scalarmds}, the authors give a framework for studying the repair bandwidth of linear MDS codes over finite fields, and as an example they analyze a few specific small RS codes.  Surprisingly---and this surprise was the inspiration for our work here---they show that for these small codes, one can do better than the naive repair scheme.  In this work we give a more general result, for all $k$, using different techniques.  We will survey the related work in the next section.

\paragraph{Organization.}
In Section~\ref{sec:prelim}, we set up notation and survey related work.  In Section~\ref{sec:results}, we state our results in more detail, and give an outline of the proofs, which are contained in Sections~\ref{sec:wlog},\ref{sec:upper}, and \ref{sec:lower}.  We conclude in Section~\ref{sec:conclusion} with some open questions.

\section{Set-up and Related Work}\label{sec:prelim}
In this section we set up our definitions for Reed-Solomon and regenerating codes, and survey related work.  We note that,
in the regenerating codes literature, it is common to use the Greek letters $\alpha,\beta,\gamma$ for parameters of the code.  We prefer to reserve Greek letters for elements of the field $F$ and use Roman letters (like $t,b$) the parameters.
For convenience, Table~\ref{fig:translation} in Appendix~\ref{app:notation} offers a translation between our notation and common notation in the regenerating codes literature.

\subsection{Codes and Reed-Solomon Codes}

A \em code \em $\cC$ over a field $F$ of length $n$ is a subset $\cC \subset F^n$.  We will view an element of $F^n$ as an $F$-valued function over a domain $A$ of size $n$.  Thus, a code is a collection of functions $\mathcal{F}$ from $A = \inset{\alpha_1,\ldots,\alpha_n}$ into $F$: the code $\cC \subseteq F^n$ determined by $\mathcal{F}$ and $A$ is 
\[ \cC = \inset{ (f(\alpha_1),f(\alpha_2), \ldots, f(\alpha_n) ) \suchthat f \in \mathcal{F} }. \]
In this work, we will often abuse notation and write $f \in \cC$, to mean that the evaluation vector $(f(\alpha_1),\ldots,f(\alpha_n))$ is in $\cC$.
The number of evaluation points $n$ is called the \em block length \em of the code.  
In this work, we study \em linear \em codes, i.e., those where $\mathcal{F}$ forms an $F$-vector space, and so $\cC$ forms a subspace of $F^n$.  For a linear code, the \em dimension \em $k$ is the dimension of this subspace, and the \em rate \em $r$ is defined as the ratio $k/n$.  
We refer to a function $f \in \mathcal{F}$ as a \em message, \em and a corresponding vector $(f(\alpha_1),\ldots,f(\alpha_n)) \in \mathcal{C}$ as a \em codeword. \em  For $c \in F^n$, we refer to the components $c_i \in F$ of $c$ as \em symbols. \em 

A Reed-Solomon code is the linear code formed when $\mathcal{F}$ is a set of low-degree polynomials, and $A \subset F$ is some set of evaluation points.
\begin{definition}\label{def:rs}
The Reed-Solomon code $RS(A,k) \subset F^n$ of dimension $k$ over a finite field $F$ with evaluation points $A = \inset{\alpha_1,\alpha_2,\ldots,\alpha_n} \subset F$ is the set
\[ RS(A,k) = \inset{ (f(\alpha_1), f(\alpha_2), \ldots, f(\alpha_n)) \suchthat f:F \to F \text{ is a polynomial of degree at most $k-1$ } }. \]
\end{definition}
Reed-Solomon codes are \em Maximum Distance Separable \em (MDS) codes, which means that any $k$ symbols (that is, evaluations of a polynomial $f$) can be used to recover the entire codeword (that is, $f$ itself).  
\begin{definition}
A linear code $\cC$, given by $\mathcal{F},A$, is Maximum Distance Separable (MDS) if the minimum distance of the code is the maximum possible, that is, if
\[ \min_{f \neq g \in \mathcal{F} } \inabs{ \inset{ \alpha \in A \suchthat f(\alpha) \neq g(\alpha) }}  = n - k + 1. \]
\end{definition}
In particular, in an MDS code, any $k$ symbols $f(\alpha)$ are enough to determine $f$ and hence the entire codeword.  Conversely, $k$ symbols are necessary to determine $f$: given only $k-1$ symbols, a remaining symbol $f(\alpha^*)$ could be any element of $F$.

Our characterization of linear repair schemes will go throught the \em dual code, \em $\cC^\perp$:
\begin{definition}
Suppose $\cC$ is a linear code given by $\cF$, $A$.  
The dual code $\cC^\perp$ of $\cC$ is
\[ \cC^\perp = \inset{ g:F \to F \suchthat \sum_{\alpha \in A} f(\alpha)g(\alpha) = 0 }. \]
\end{definition}
The dual of an MDS code is again MDS.

\begin{remark}[Non-standard notation for MDS codes]
While it is common to view Reed-Solomon codes as sets of functions $f:F^n \to F$, it is not standard to think of general MDS codes this way.  
In particular, we index the positions $1,\ldots,n$ of the codeword by evaluation points $\alpha_1,\ldots,\alpha_n$, even though for general MDS codes there may not be a natural choice of such evaluation points.
Our main theorems are general and apply to all MDS codes, but the primary motivation of this work is Reed-Solomon Codes.  Further, in the context of Reed-Solomon Codes, the choice of evaluation points is the crux of our constructions.   For these reasons, we stick with the Reed-Solomon-inspired notation throughout the paper. 
\end{remark}

\subsection{Exact Repair Problem and Regenerating Codes}

Recall the exact repair problem from the introduction: a file, consisting of $k$ blocks, is encoded into $n$ nodes.  The goal is to recover the contents of an erased node by downloading some information from the remaining nodes.
In the language of MDS codes as above, the file is a 
function $f \in \mathcal{F}$, which can be represented as $k$ symbols from a finite field $F$.
Each of the $n$ nodes is associated with an evaluation point
$\alpha \in A$, and it stores $f(\alpha)$.  For an arbitrary $\alpha^* \in A$ (corresponding to an erased node), the goal is to recover $f(\alpha^*)$ given some
information from $f(\alpha)$ for $\alpha \in A \setminus \inset{ \alpha^* }$.
Crucially, we may choose to download only \em part \em of each symbol
$f(\alpha) \in F$, meaning that a node may return fewer than $\log_2(|F|)$ bits of information when queried.   
More precisely, each node may return some number of \em sub-symbols. \em  A sub-symbol is an element of some ``base" set $B$ which is smaller than $F$---for example, $B = \{0,1\}$. 
%
While in principle a node's response can be an arbitrary function of its contents, in
this work we focus on \em linear \em repair schemes.  That is, we assume that
$B \leq F$ is a subfield, we view $F$ as a vector space over $B$, and we assume
that each node $\alpha$ may return any $B$-linear function of its contents $f(\alpha)$.
The $B$-linear
transformations from $F$ to  $B$ are precisely the \em trace functionals \em
$L_\gamma: F \to B$ given by
$ L_\gamma(\beta) = \tr_{F/B}( \gamma \beta ).$
Here, $\tr_{F/B}$ is the \em field trace \em of $F$ over $B$:
\begin{definition}
Let $F = GF(q^t)$ be a finite field extension of $B = GF(q)$ of degree $t$.  The \em field trace \em is defined as 
\[ \tr_{F/B}(\beta) = \beta + \beta^q + \beta^{q^2} + \cdots + \beta^{q^{t-1}}. \]
\end{definition}
Thus, in a linear repair scheme, the node corresponding to $\alpha$ 
returns zero or more elements of $B$ of the form $L_\gamma(f(\alpha))$. 
A linear exact repair scheme can then be described by the field elements $\gamma$ that are used in each trace functional, along with a (linear) repair algorithm.  We give a precise definition below.
\begin{definition}[Linear exact repair scheme]\label{def:repair}
Let $\cC$ be a linear code over $F$ of length $n$ and dimension $k$, given by a collection of functions $\mathcal{F}$ and a set of evaluation points $A$.  A \em linear exact repair scheme \em for $\cC$ over a subfield $B \leq F$ consists of the following.
\begin{itemize}
	\item For each $\alpha^* \in A$, and for each $\alpha \in A \setminus \inset{ \alpha^* }$, a set of queries $Q_\alpha(\alpha^*) \subset F$.
	\item For each $\alpha^* \in A$, a linear reconstruction algorithm that computes 
\[ f(\alpha^*) = \sum_i \lambda_i \nu_i \]
for coefficients $\lambda_i \in B$ and a basis $\nu_1,\ldots,\nu_t$ for $F$ over $B$, so that the coefficients $\lambda_i$ are $B$-linear combinations of the queries 
\[{ \bigcup_{\alpha \in A\setminus \alpha^*} \inset{ \tr_{F/B}( \gamma f(\alpha) ) \suchthat \gamma \in Q_\alpha(\alpha^*) } }.\]
\end{itemize}
The \em repair bandwidth \em $b$ of the exact repair scheme is the total number of sub-symbols in $B$ returned by each node $\alpha$:
\[ b = \max_{\alpha^* \in A} \sum_{\alpha \in A \setminus \inset{\alpha^*}} |Q_\alpha(\alpha^*)|. \]
The \em repair locality \em of the exact repair scheme is the number of $\alpha$ which are required to respond:
\[ d = \max_{\alpha^* \in A} \sum_{\alpha \in A \setminus \inset{ \alpha^* }} \ind{ Q_{\alpha}(\alpha^*) \neq \emptyset }. \]
We will define
\[ t = \log_{|B|}(|F|) \]
to be the dimension of $F$ as a vector space over $B$.  Thus, we can view each symbol from $F$ as a vector of $t$ sub-symbols from $B$.
\end{definition}

We illustrate the setup for Definition~\ref{def:repair} in Figure~\ref{fig:setup}.

\vspace{.22cm}

\begin{figure}
\begin{tikzpicture}[yscale=.6]
\draw (0,0) rectangle (1, 6);
\foreach \i in {1,...,5}
{
	\draw[thick] (0, \i) -- (1,\i);
}
\foreach \i in {1,...,5}
{
	\draw[thin] (\i/5, 5) -- (\i/5, 6);
}
\draw [decorate,decoration={brace,amplitude=10pt},xshift=-4pt,yshift=0pt,violet!70!black]
(0,0) -- (0,6) node [black,midway,xshift=-2cm, violet!70!black] {$k$ symbols from $F$};
\draw [decorate,decoration={brace,amplitude=5pt},xshift=0pt,yshift=4pt, violet!70!black]
(0,6) -- (1,6) node [black,midway,yshift=1cm, violet!70!black] {\cmp{3}{Each symbol from $F$ is made up of $t$ symbols from $B$}};
\node(a) at (.5,-1) {$f \in F^k$};
\draw[very thick,->,green!50!black] (1.5,4) to node[midway,below,yshift=-.3cm]{\cmp{3.8}{An MDS code interprets $f\in F^k$ as a function $f:A \to F$ and maps $F^k \to F^n$ by 
\\ {$f \mapsto (f(\alpha_1),\ldots, f(\alpha_n))$}}} (5,4);
\begin{scope}[xshift=5.5cm]
\draw (0,-2) rectangle (1,8);
\foreach \i in {-2,...,8}
{
	\draw (0,\i) -- (1, \i);
}
\node(c) at (1.5, 9) {$f(\alpha_1) \in F$};
\draw[->] (c) to (.5, 7.5);
\node(d) at (1.5,-3) {$f(\alpha_n) \in F$};
\draw[->] (d) to (.5, -1.5);
\node(e) at (2.5, .5) {$f(\alpha^*)$};
\draw[->] (e) to (1, .5);
\draw[thick,red] (0,0) -- (1,1);
\draw[thick,red] (0,1) -- (1,0);
\node(f1) at (1,3.5) {};
\node(f2) at (1,2.5) {};
\node(f3) at (1,6.5) {};
\node(f4) at (1,-1.5) {};
\end{scope}
\begin{scope}[xshift=10cm]
\draw (0,0) rectangle (1,1);
\node at (2,.5) {$f(\alpha^*)$?};
\node[anchor=north] at (.5,-.5) {\cmp{3}{Replacement node}};
\foreach \a in {1,2,3,4}{
	\draw[thick,blue,->] (f\a) to[out=0,in=180] (0,.5);
}	
\node[blue] at (1,4) {\cmp{4}{The node holding $f(\alpha)$ sends some sub-symbols in $B$, which are $B$-linear functions of its contents}};
\end{scope}
\end{tikzpicture}
\caption{Setup for a linear exact repair scheme.  $B \leq F$ is a subfield of $F$, and $F$ is a $t$-dimensional vector space over $B$.
For a Reed-Solomon code, the encoding maps a file $f \in (\beta_0,\beta_1,\ldots,\beta_{k-1}) \in F^k$ to evaluations of the polynomial $f(\bX) := \sum_i \beta_i \bX^i$.  The repair bandwidth is the number of symbols from $B$ that the replacement node needs to download in order to reconstruct $f(\alpha^*)$.}
\label{fig:setup}
\end{figure}

\paragraph{What we care about.}  There are several parameters of interest in the use of MDS regenerating codes for storage.  
The three that we focus on in this work are the rate $k/n$, the repair bandwidth $b$, and the size of the base field $B$. We would like the rate to be as large as possible, ideally approaching $1$; this means that we minimize storage overhead.  
We would also like to minimize the number of \em bits \em $b\log_2(|B|)$ downloaded by the replacement node; this means we would like to minimize the communication from the remaining nodes to the replacement node in the repair process.  

Before we discuss related work and the use of Reed-Solomon codes for the exact repair problem, we make a few remarks about our definitions and goals, and their relationship to the regenerating codes literature.
\begin{remark}[Measuring bandwidth in bits]\label{rem:bits}
We focus on the quantity $b\log_2(|B|)$ rather than on just $b$ for the following reason.  When $B = F$, it is trivial to obtain $b = k$, which is clearly optimal.  However, this is also clearly not a good solution, as it is the same as the traditional RS approach from the introduction.  Focusing on $b\log_2(|B|)$ means that we always measure bits, rather than symbols of some possibly-large subfield $B$.
\end{remark} 

\begin{remark}[Direction of communication]
The definition of repair bandwidth above only counts communication from the remaining nodes to the replacement node.  The astute reader will have noticed that the replacement node must also communicate to the remaining nodes!  Indeed, the remaining nodes must know the identity $\alpha^*$ of the erased node (or at least know what function of their contents they are supposed to return).  While this replacement-to-remaining-nodes direction of communication is important, in practice the cost of this communication is negligable compared to the remaining-nodes-to-replacement direction that is captured in the definition of repair bandwidth.  We elaborate more on this point in Appendix~\ref{app:justification}.  For now, we just point out the regenerating codes literature focuses on this one-way definition of repair bandwidth, and we also adopt this definition in our work. 
\end{remark}

\begin{remark}[MSR codes]\label{rem:msr}
In the regenerating codes literature, the size of the blocks in the file needn't be the same as the storage capacity of the nodes, and there is a beautiful theory investigating the trade-offs this involves.  Because we are interested in Reed-Solomon codes, which have the same message alphabet and codeword alphabet, these sizes are the same.  In the regenerating codes terminology, this means we are working with \em minimum storage regenerating \em (MSR) codes.\footnote{These codes are referred to as \em minimum storage \em because the storage in each of the $n$ nodes is as small as possible, given that any $k$ nodes should be able to reconstruct the message.}  In this work we restrict our discussion to this setting.
\end{remark}

\begin{remark}[A complicated landscape]
There are many figures of merit and variations on the exact repair problem.  For example, in addition to rate and bandwidth, we may care about locality; we may care about multiple erasures; we may not need to reconstruct the erased node exactly, but simply want to maintain the MDS property (this is called \em functional repair\em); we may want to leave the MSR parameter regime; and so on.  There is a growing body of work addressing these and other trade-offs, and the lay of the land is still not fully understood.  The reader is referred to the survey~\cite{survey} and the very helpful \em Erasure Coding for Distributed Storage Wiki\em~\cite{wiki} for more details about these and other variants.
\end{remark}

\subsection{Repair Bandwidth of Reed-Solomon Codes}

Reed-Solomon codes are commonly used for storage, but as mentioned above the traditional strategy (which has $b\log_2(|B|) = k\log_2(|F|)$) is not a good idea for the exact repair problem.  
However, the traditional strategy is not the best one can do!
In~\cite{scalarmds}, Shanmugan et al. develop a general framework for
studying the repair bandwidth of \em scalar MDS codes\em---that is, codes whose
symbols naturally come from some field $F$ rather than being constructed
specifically as vectors over $B$.  As one of their examples, they show that for a few specific Reed-Solomon codes, one can do better than the naive scheme.

More precisely, \cite{scalarmds} adapts techniques from \em interference alignment \em (which have been previously used to construct good regenerating codes) to the scalar MDS setting.  For general MDS codes with $k = n-2$, they give a polynomial-time algorithm which will find the optimal linear systematic repair scheme returning a single symbol from the subfield $B$.  They apply this to find optimal linear exact systematic repair schemes for a $(5,3)$ and $(6,4)$-Reed-Solomon codes\footnote{An $(n,k)$-RS code has block length $n$ and dimension $k$.}, and they find non-trivial systematic repair schemes for the $(14,10)$ Reed-Solomon code used in a module for the Apache Hadoop Distributed File System which is currently deployed by Facebook.

There have been works which use RS codes as a building block for codes for distributed storage and related problems~\cite{xoringelephants,HPZV13,HZM12,RSKR09,TPD13,TB14,TB14b,HLKB15}.  These works modify RS codes by, for example, adding parity checks, taking subcodes, folding, concatenating with other codes, and so on, but to the best of our knowledge, only the work of~\cite{scalarmds} described above addresses the repair bandwidth of Reed-Solomon codes themselves.
Before we describe the rest of the literature surrounding exact recovery, we note two differences between our approach and that of~\cite{scalarmds}.
\begin{itemize}
\item First, in~\cite{scalarmds}, the proof applies only for $k = n-2$, while our approach works for all $(n,k)$.  On the other hand, their approach works for any MDS code, while ours is tailored for Reed-Solomon codes.
\item Second, \cite{scalarmds} considers exact repair of \em systematic nodes \em only.  That is, $k$ of the $n$ storage nodes hold the original message, and the rest are viewed as parity checks; only these $k$ special nodes are required to be repairable.  In contrast, our approach guarantees recovery of all $n$ nodes.
\end{itemize}
As pointed out in~\cite{scalarmds}, understanding the repair bandwidth of Reed-Solomon codes is an important problem, even if RS codes are not the best codes available.  Indeed, these codes are implemented in practice in distributed storage systems (the example from~\cite{scalarmds} is the HDFS-RAID module, which we will return to in Section~\ref{sec:hdfs}), and it may be easier to implement improved algorithms on existing systems rather than replace the system.
Pinning down the repair bandwidth of Reed-Solomon codes was asked as an open question by Dimakis et al. in~\cite{survey}.

\subsection{Existing Results for the Exact Repair Problem for General MDS Codes}
In order to set expectations, we briefly survey the upper and lower bounds available for exact repair using MDS codes (not necessarily RS codes). 
There are two main parameter regimes, depending on the parameter $t$.
This parameter (which is the number of sub-symbols per symbol, or the degree of $F$ over $B$) controls the level of \em subpacketization \em in the regenerating code.  
The first parameter regime, more commonly studied for regenerating codes, is when $t$ is (very) large compared to $n-k$.  
The second parameter regime, more natural for RS codes, is when $t$ is small compared to $n-k$.  Both settings have their advantages.  When $t$ is large, each symbol can be sub-divided further (we have more subpacketization), and as we will see this allows for better bandwidth guarantees.  On the other hand, when $t$ is small, the field extension $F$ over $B$ is smaller, and this is easier to work with in practice.
%

In this work we consider both parameter regimes.  Our main focus is constant-rate RS codes with $A = F$, and so $t = \log_2(n)$ is small compared to $n-k$. 
However, our framework also works for RS codes with $A \subset F$ and with $n-k$ very small, and we give examples of constructions when $t$ is large compared to $n-k$ as well.

\paragraph{Regime 1: large $t$.}
When $t$ is sufficiently large, it is known that the ``correct" answer for the repair bandwidth is 
\begin{equation}\label{eq:bigt}
 b = \frac{ td }{d + 1 - k}, \tag{$\star$}
\end{equation}
The lower bound on $b$ is a fundamental result of~\cite{DGWW10,WDR07}, and actually holds for \em functional \em repair as well as exact repair.\footnote{In the functional repair problem, the replacement node needn't be a copy of the lost node; rather it just must maintain the MDS property.  For some applications this is enough.  However, for several reasons it is also useful to study the exact repair problem~\cite{SR10}.  Further, for us, only the exact repair problem makes sense given that we want to study a fixed code $RS(A,k)$.}
For the upper bound, it is shown in~\cite{SR10b, CJM13} that as $t \to \infty$ (much faster than $n$), the exact repair bandwidth can approach \eqref{eq:bigt}.
However, for this result, $t$ must scale exponentially in $n$.  It is conjectured that this exponential scaling is necessary~\cite{twb14}, but the best that is known is that $t$ must be at least $k^2$ in order for~\eqref{eq:bigt} to hold; for very high-rate codes, with $k = n-O(1)$, we do know that $t \geq \exp(\sqrt{n})$ is required~\cite{gtc}.  
There are also several schemes acheiving \eqref{eq:bigt} exactly for particular parameter settings and/or systematic repair only, and for large $t$~\cite{CHJL11, SR10, SRKR09, PDC13, CHL11, TWB13}.

\paragraph{Regime 2: small $t$.}
When $t$ is small compared to $n-k$, on the other hand, 
it is clear that \eqref{eq:bigt} cannot be met.  Indeed, 
\begin{equation}\label{eq:trivial}
b \geq k + t - 1 \tag{$\star\star$}
\end{equation}
 is a trivial lower bound on the exact repair bandwidth for any MDS code.\footnote{To see this, imagine that we download only a single sub-symbol from each of $k-1$ nodes and are given the remaining symbols for free.  Because the code is MDS, the final symbol could be anything; thus we need to read at at least one more symbol's worth of information---or at least $t$ more sub-symbols---to determine it.}  Thus,
if $t < n-k$,
 we have $b \gg \frac{td}{d + 1 - k}$.
In this regime, we must have the ratio $b/t$ tend to infinity.  However, it is still the case that a bound of $b = t + k - 1$ is much better than the naive bound of $b = tk$.

We are only aware of two works addressing the exact repair problem when $t$ is small compared to $n-k$.  The first is~\cite{WD09}, who give a scheme with bandwidth $(k-1)t + 1$.  Since the naive scheme has bandwidth $kt$, this is a slight improvement.  The second work is~\cite{SRKR09}.  There the authors give optimal schemes, meeting \eqref{eq:trivial} for $t \geq k-2$.  However, for $t < k$ these results hold only for systematic nodes.  They also show that, when only one sub-symbol is downloaded from every node and the reconstruction algorithm is linear, \eqref{eq:trivial} cannot be met for $t \leq k - 3$.

We give a more detailed summary of known results for the exact repair problem for MDS codes, and compare them to our results for Reed-Solomon codes, in Table~\ref{fig:exactmsr} in Appendix~\ref{app:litreview}.   We outline our results in more detail in the next section.

\section{Results Overview}\label{sec:results}
Our main result is pinning down the best exact repair bandwidth of Reed-Solomon codes with $A = F$ that can be acheived with linear schemes (Definition~\ref{def:repair}).   This will follow for a characterization of linear exact repair schemes for RS codes, which is given in Section~\ref{sec:wlog}.

To formulate our characterization, we first show in Theorem~\ref{thm:wlog} that any linear exact repair scheme proceeds roughly as follows.
First, we write the erased data $f(\alpha^*)$ as a linear combination of the available data $f(\alpha)$:
\[ \zeta f(\alpha^*) = \sum_{\alpha \in A \setminus \alpha^*} \mu_{\alpha,\zeta}(\alpha^*) f(\alpha), \]
and we may do this for several different $\zeta \in F$.
Next, we take the trace of both sides: 
\[ \tr_{F/B}( \zeta f(\alpha^*) ) = \sum_{\alpha \in A \setminus \alpha^*} \tr_{F/B}(\mu_{\alpha,\zeta}(\alpha^*) f(\alpha) ). \]
If, for each $\alpha$, the node corresponding to $\alpha$ delivers $\tr_{F/B}(\mu_{\alpha,\zeta}(\alpha^*) f(\alpha) )$, we can recover $\tr_{F/B}(\zeta f(\alpha^*))$.
If we do this for enough different $\zeta$'s we can recover $f(\alpha^*)$.  Thus, our goal is to find $\mu$'s and $\zeta$'s so that there are many collisions between the multipliers $\mu_{\alpha,\zeta}(\alpha^*)$ that a given node $\alpha$ is responsible for. 
As stated, this appears to be a daunting task.
However, we show in Theorem~\ref{thm:polynomials} that this task is equivalent to the problem of finding some nice polynomials over $F$, and this will give us our characterization.

In Section~\ref{sec:upper}, we use this characterization with trace polynomials to obtain an exact repair scheme for high-rate Reed-Solomon codes which use the whole field as evaluation points.
More precisely, we prove the following theorem.
\begin{theorem}\label{thm:fulllen}
Let $B \leq F$ be any subfield of $F$, and let $k = (1 - 1/|B|)|F|$.  Then the Reed-Solomon code $RS(F, k)$ of rate $1 - 1/|B|$ which uses the entire field $F$ as evaluation points admits a linear exact repair scheme over $B$ with repair bandwidth $n-1$.
\end{theorem}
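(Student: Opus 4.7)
The plan is to invoke the characterization (Theorem~\ref{thm:polynomials}), which reduces the problem of constructing a linear exact repair scheme for $RS(F,k)$ to that of exhibiting, for each $\alpha^* \in F$, a set of dual codewords $g_1, \ldots, g_t \in RS(F,k)^\perp$ (where $t = \log_{|B|}|F|$) satisfying (i) $\{g_i(\alpha^*)\}_{i=1}^t$ is a $B$-basis of $F$, and (ii) $\sum_{\alpha \neq \alpha^*} \dim_B \spn_B\{g_i(\alpha) : i \in [t]\}$ is at most the target bandwidth. When $A = F$, the dual of $RS(F,k)$ is $RS(F,n-k)$, so the $g_i$ may be taken to be polynomials of degree at most $n-k-1 = |B|^{t-1} - 1$.

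For the construction, I would fix a $B$-basis $\zeta_1, \ldots, \zeta_t$ of $F$ and set
\[ g_i(x) \;=\; \frac{\tr_{F/B}(\zeta_i(x - \alpha^*))}{x - \alpha^*} \;=\; \sum_{j=0}^{t-1} \zeta_i^{|B|^j}(x - \alpha^*)^{|B|^j - 1}. \]
This is a polynomial because $\tr_{F/B}(0) = 0$ forces $(x-\alpha^*)$ to divide the numerator, and its degree is $|B|^{t-1} - 1 = n - k - 1$, placing $g_i$ in the dual code.

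Verification of the two properties is then routine. Evaluating at $x = \alpha^*$ kills every term with $j \geq 1$ (since $(x-\alpha^*)^{|B|^j - 1}$ vanishes there), leaving $g_i(\alpha^*) = \zeta_i$, so condition (i) holds. For $\alpha \neq \alpha^*$, the value $\tr_{F/B}(\zeta_i(\alpha-\alpha^*))$ is an element of $B$, so
\[ g_i(\alpha) \;=\; \frac{\tr_{F/B}(\zeta_i(\alpha-\alpha^*))}{\alpha - \alpha^*} \;\in\; B \cdot \frac{1}{\alpha-\alpha^*}, \]
and all $t$ values lie in a single one-dimensional $B$-subspace of $F$, independent of $i$. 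Hence each of the $n-1$ surviving nodes contributes at most one sub-symbol from $B$, yielding total bandwidth at most $n-1$.

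The main obstacle is arriving at the construction in the first place, since the design requirement is rather stringent: polynomials of degree strictly below $n - k = n/|B|$ whose evaluations at $\alpha^*$ span all of $F$ over $B$, yet whose evaluations collapse into a single $B$-line at every other point of $F$. The choice $\tr_{F/B}(\zeta_i(x - \alpha^*))/(x - \alpha^*)$ is the natural one: the trace is the canonical $B$-linear surjection $F \to B$, multiplying its argument by $\zeta_i$ across a basis produces enough $B$-independence at $\alpha^*$, and dividing by $(x-\alpha^*)$ exactly cancels the vanishing of the numerator there while acting as a common $F$-scalar (independent of $i$) at every other evaluation point, producing the desired collapse onto a $B$-line. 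Once this construction is guessed, all verifications are immediate.
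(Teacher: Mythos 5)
Your proposal is correct and follows essentially the same route as the paper: both reduce to the dual-code characterization and then choose the polynomials $\tr_{F/B}(\zeta(\bX-\alpha^*))/(\bX-\alpha^*)$ for $\zeta$ ranging over a basis of $F$ over $B$, with identical verifications of the rank conditions. The only cosmetic difference is that you invoke Theorem~\ref{thm:polynomials} via the duality $RS(F,k)^\perp = RS(F,n-k)$ directly, while the paper routes through Corollary~\ref{cor:polynomials}, which strips the GRS multipliers for the same effect.
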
 
As per Remark~\ref{rem:bits}, it is 
instructive also to write this in terms of bits.  Returning a symbol of $B$ is equivalent to returning $\log_2(B)$ bits, and we have the following corollary.
\begin{cor}\label{cor:binary}
Suppose that $F$ has characteristic $2$.
Let $B \leq F$ be a subfield and let $\eps = |B|^{-1}$.  
Then there is a linear exact repair scheme for $RS(F, (1-\eps)|F|)$ over $GF(2)$ with repair bandwidth $(n-1)\log_2(1/\eps)$. 
\end{cor}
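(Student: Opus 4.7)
The plan is to deduce the corollary directly from Theorem~\ref{thm:fulllen} by converting the repair scheme over $B$ into one over $GF(2)$ with the same total bit-bandwidth. First, I would apply Theorem~\ref{thm:fulllen} to produce a linear exact repair scheme for $RS(F,(1-\eps)|F|)$ over the subfield $B$, with repair bandwidth $n-1$ measured in symbols of $B$. Since $|B| = 1/\eps$ and $F$ has characteristic $2$, each element of $B$ is an element of a field of size $|B|$, which is an extension of $GF(2)$ of degree $\log_2(1/\eps)$, and hence is a $GF(2)$-vector of length $\log_2(1/\eps)$. So in raw bit count the scheme already transmits $(n-1)\log_2(1/\eps)$ bits.

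The second step is to argue that this gives a bona fide $GF(2)$-linear repair scheme in the sense of Definition~\ref{def:repair}. Each response from a node $\alpha$ in the scheme from Theorem~\ref{thm:fulllen} is of the form $\tr_{F/B}(\gamma f(\alpha)) \in B$ for some $\gamma \in F$. Fix a basis $b_1,\ldots,b_s$ for $B$ over $GF(2)$, where $s = \log_2(1/\eps)$; then decomposing $\tr_{F/B}(\gamma f(\alpha)) = \sum_{i=1}^s c_i b_i$ expresses the $B$-symbol as $s$ coordinates $c_i \in GF(2)$. Each $c_i$ is a $GF(2)$-linear functional of $\tr_{F/B}(\gamma f(\alpha))$, and hence a $GF(2)$-linear functional of $f(\alpha)$. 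By transitivity of the trace (using $\tr_{F/GF(2)} = \tr_{B/GF(2)} \circ \tr_{F/B}$) and the fact that every $GF(2)$-linear functional on $F$ is of the form $\beta \mapsto \tr_{F/GF(2)}(\gamma' \beta)$ for some $\gamma' \in F$, each $c_i$ can be written as $\tr_{F/GF(2)}(\gamma_i' f(\alpha))$. Thus the $B$-response decomposes into $s$ legitimate $GF(2)$-queries of the form required by Definition~\ref{def:repair}.

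Finally, I would verify that the reconstruction side still works: the scheme over $B$ reconstructs $f(\alpha^*)$ as a $B$-linear (in particular $GF(2)$-linear) combination of the $B$-symbols returned, so re-expanding everything over $GF(2)$ with the basis $b_1,\ldots,b_s$ yields a $GF(2)$-linear reconstruction of $f(\alpha^*)$ from the $GF(2)$-valued queries. Counting bandwidth: each of the $n-1$ original responses splits into $s = \log_2(1/\eps)$ bits, giving a total of $(n-1)\log_2(1/\eps)$ bits, as claimed.

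There is really no obstacle here beyond bookkeeping; the only point that requires a moment's care is the trace-transitivity step, which ensures that refining the sub-symbol alphabet from $B$ to $GF(2)$ keeps the scheme inside the linear framework of Definition~\ref{def:repair} rather than merely counting bits after the fact.
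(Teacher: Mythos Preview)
Your proposal is correct and follows the same approach as the paper, which simply states the corollary as immediate from Theorem~\ref{thm:fulllen} with the one-line remark that ``returning a symbol of $B$ is equivalent to returning $\log_2(|B|)$ bits.'' Your additional care in verifying via trace transitivity that the refined scheme is genuinely $GF(2)$-linear in the sense of Definition~\ref{def:repair} (rather than merely having the same bit count) fills in detail the paper leaves implicit, but the underlying idea is the same.
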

This scheme is nearly optimal for MDS codes with any repair scheme; the lower bound is $k + t - 1$ sub-symbols, and Theorem~\ref{thm:fulllen} uses $k/(1-\eps) - 1$ subsymbols.  However, our second contribution is to prove an even stronger lower bound for linear repair schemes.
More precisely, in Section~\ref{sec:lower}, we show that Corollary~\ref{cor:binary} is optimal, even up to the leading constants, 
for linear schemes.
\begin{theorem}	\label{thm:optimal}
Let $\mathcal{C}$ be an MDS code of dimension $k$ with evaluation points $A$ over a field $F$.  Let $B \leq F$ be a subfield.  Any linear repair scheme for $\cC$ over $B$ must have bandwidth (measured in subsymbols of $B$) at least
\[ b \geq (n-1) \log_{|B|} \inparen{ \frac{n-1}{n-k} }. \]
In particular, the bandwidth (measured in bits) for any linear repair scheme for an MDS code with rate $ 1- \eps$ over any base field $B$ is at least
\[ b\log_2(|B|) \geq (n-1) \log_2\inparen{ \frac{1}{\eps} \inparen{1 - \nicefrac{1}{n}} }. \]
\end{theorem}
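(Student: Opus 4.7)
The plan is to invoke the characterization of linear repair schemes developed in Section~\ref{sec:wlog} to recast the scheme in dual-code language, and then extract the lower bound from the minimum distance of $\cC^\perp$ through a double-counting argument followed by Jensen's inequality.

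First I would fix the failed node $\alpha^* \in A$. By the characterization, a linear repair scheme supplies, for each $\zeta$ in some $B$-basis $\zeta_1,\ldots,\zeta_t$ of $F$, a dual codeword $g^{(\zeta)} \in \cC^\perp$ with $g^{(\zeta)}(\alpha^*) = \zeta$; setting $V_\alpha := \spn_B\{g^{(\zeta_i)}(\alpha)\}_{i=1}^t \subseteq F$, the per-node bandwidth is $b_\alpha = \dim_B V_\alpha$ and the total is $b = \sum_{\alpha \neq \alpha^*} b_\alpha$. Let $W$ be the $B$-span of $\{g^{(\zeta_i)}\}_{i=1}^t$ inside $\cC^\perp$. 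Because the values of these codewords at $\alpha^*$ form a $B$-basis of $F$, evaluation at $\alpha^*$ maps $W$ onto $F$, so $\dim_B W \geq t$ and in particular $|W| \geq |F| \geq n$.

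The key step exploits the fact that $\cC^\perp$ is MDS of dimension $n-k$ and minimum distance $k+1$, so every nonzero $g \in W$ is nonzero on at least $k$ coordinates of $A \setminus \{\alpha^*\}$. Double-counting the pairs $(g,\alpha) \in (W \setminus \{0\}) \times (A \setminus \{\alpha^*\})$ with $g(\alpha) \neq 0$, and noting that the $B$-linear evaluation map $W \to V_\alpha$ has kernel of size at least $|W|/|V_\alpha| = |W|\cdot|B|^{-b_\alpha}$, produces
\[
  \sum_{\alpha \neq \alpha^*} |W|\bigl(1 - |B|^{-b_\alpha}\bigr) \;\geq\; (|W|-1)\,k,
\]
which rearranges to $\sum_{\alpha \neq \alpha^*} |B|^{-b_\alpha} \leq (n-k-1) + k/|W| \leq n-k$, the last inequality using $|W| \geq n \geq k$.

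To conclude, convexity of $-\log$ applied over the $n-1$ indices $\alpha \neq \alpha^*$ (Jensen's inequality) gives
\[
  b \;=\; -\!\!\sum_{\alpha \neq \alpha^*}\!\log_{|B|} |B|^{-b_\alpha} \;\geq\; (n-1)\log_{|B|}\!\left(\frac{n-1}{\sum_{\alpha \neq \alpha^*}|B|^{-b_\alpha}}\right) \;\geq\; (n-1)\log_{|B|}\!\left(\frac{n-1}{n-k}\right),
\]
which is the first claimed bound. The bit-valued version then follows immediately by multiplying by $\log_2|B|$ and rewriting $(n-1)/(n-k) = (1-1/n)/\eps$ with $\eps := (n-k)/n$. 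The step I expect to be the main obstacle is recognizing why the ``obvious'' MDS argument---that projecting $W$ to any $(n-k)$-subset of coordinates is injective over $F$, hence $\dim_B W \leq \sum_{\alpha \in S} b_\alpha$ for every such $S$---yields only a linear-in-$t$ lower bound and misses the logarithmic factor $\log_{|B|}((n-1)/(n-k))$. The resolution is to apply the minimum-distance condition to \emph{all} nonzero elements of $W$ simultaneously via the counting step above, producing the inequality $\sum_{\alpha \neq \alpha^*} |B|^{-b_\alpha} \leq n-k$ whose entropy-like shape is exactly what Jensen then converts into the desired $(n-1)\log_{|B|}((n-1)/(n-k))$ lower bound.
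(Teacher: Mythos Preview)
Your proposal is correct and follows essentially the same route as the paper: both arguments pass via the characterization to a family of $t$ dual codewords, exploit the MDS minimum distance of $\cC^\perp$ on their $B$-span to derive $\sum_{\alpha\neq\alpha^*}|B|^{-b_\alpha}\le n-k$, and then finish by convexity. The only cosmetic difference is that the paper averages over nonzero $x\in B^t$ and picks one $x^*$ whose combination $p_{x^*}$ has at least the average number of zeros (forcing that average below $n-k$), whereas you do the equivalent double count of nonzero evaluations directly; your observation $|W|\ge|F|\ge n$ plays the same role as the paper's factor $|F|/(|F|-1)$ in the averaging step.
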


We also give a few other examples of how to use our characterization for RS codes with $A \neq F$.  In Section~\ref{sec:bigt}, we give an example of an RS code with non-trivial bandwidth when $|F|$ might be arbitrarily larger than $n$.  In Section~\ref{sec:hdfs}, we use our characterization, along with a computer search, to find a scheme which improves the result of~\cite{scalarmds} for the code used in the HDFS-RAID module~\cite{HDFS}; this module is currently deployed in the Facebook Hadoop Analytics cluster~\cite{xoringelephants}.

\subsection{Subsequent work}
After a preliminary version of this paper was released, it was shown in~\cite{YB16} that in fact our framework can be used to obtain good repair schemes for RS codes in the ``large $t$" regime, meeting the cut-set bound~\ref{eq:bigt}.  
More precisely, for any field $B$, and for any $n,k$, they show how to choose $n$ evaluation points $A \subset \F_{|B|^t}$, for $t = (n-k)^n$, so that $RS(A,k)$ has repair bandwidth $b = t\inparen{\frac{n-1}{n-k}}$.
Thus, the final take-away should be that, in all parameter regimes, Reed-Solomon codes are competetive (in terms of bandwidth) with regenerating codes in both parameter regimes!

\section{Characterization of linear repair schemes for MDS codes}\label{sec:wlog}
In this section, 
we give a characterization of linear exact repair schemes for MDS codes.  The following theorem says that a linear exact repair scheme for a $k$-dimensional MDS code is equivalent to being able to find, for each $\alpha^* \in A$, a set $\mathcal{P}(\alpha^*)$ of dual codewords $p \in \cC^\perp$ so that $\inset{ p(\alpha) : p \in \mathcal{P}(\alpha^*)}$ spans a low-dimensional subspace for $\alpha \neq \alpha^*$, and spans a high-dimensional subspace for $\alpha = \alpha^*$.
\begin{theorem}\label{thm:polynomials}
Let $B \leq F$ be a subfield so that the degree of $F$ over $B$ is $t$, and let $A \subset F$ be any set of evaluation points.
Let $\cC \subset F$ be an MDS code of dimension $k$, with evaluation points $A$.
The following are equivalent.
\begin{itemize}
\item[(1)] 
There is a linear repair scheme for $\cC$ over $B$ with bandwidth $b$.
\item[(2)]
For each $\alpha^* \in A$, there is a set $\mathcal{P}(\alpha^*) \subset \cC^\perp$ of size $t$, so that 
\[ \dim_B\inparen{\inset{ p(\alpha^*) \suchthat p \in \mathcal{P}(\alpha^*) }} = t, \]
and the sets $\inset{ p(\alpha) \suchthat p \in \mathcal{P}(\alpha^*)}$ for $\alpha \neq \alpha^*$ satisfy
\[ b \geq \max_{\alpha^* \in A} \sum_{\alpha \in A \setminus \inset{\alpha^*}} \dim_B\inparen{ \inset{ p(\alpha) \suchthat p \in \mathcal{P}(\alpha^*) } } . \]
\end{itemize}
\end{theorem}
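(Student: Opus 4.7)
The plan is to prove the equivalence by a direct translation: a set of dual codewords $\{p_1,\ldots,p_t\} \subset \cC^\perp$ encodes the multipliers of a linear repair scheme, with the evaluations $p_i(\alpha)$ at $\alpha \neq \alpha^*$ specifying which $F$-elements the node $\alpha$ must pair with $f(\alpha)$ before taking traces, and the evaluations $p_i(\alpha^*)$ specifying which $B$-linear functionals of $f(\alpha^*)$ are ultimately recovered. The two ingredients powering both directions are the definition $\cC^\perp = \{g \in F^A : \sum_\alpha g(\alpha) f(\alpha) = 0 \text{ for all } f \in \cC\}$ together with the nondegeneracy of the trace pairing $(\beta,\gamma) \mapsto \tr_{F/B}(\beta\gamma)$; the latter identifies the space of $B$-linear functionals on $F$ with $F$ itself, so that $\dim_B \mathrm{span}_B(S)$ equals the number of $B$-independent functionals of the form $x \mapsto \tr_{F/B}(\gamma x)$ with $\gamma \in S$.

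For the direction $(2) \Rightarrow (1)$, I would pick, for each $\alpha \neq \alpha^*$, a $B$-basis $Q_\alpha(\alpha^*)$ of $V_\alpha := \mathrm{span}_B\{p_i(\alpha)\}_{i=1}^t$, producing queries whose total size is exactly $\sum_{\alpha \neq \alpha^*} \dim_B V_\alpha$. Reconstruction uses the dual identity $p_i(\alpha^*) f(\alpha^*) = -\sum_{\alpha \neq \alpha^*} p_i(\alpha) f(\alpha)$: applying $\tr_{F/B}$ and expanding each $p_i(\alpha)$ as a $B$-combination of $Q_\alpha(\alpha^*)$ exhibits $\tr_{F/B}(p_i(\alpha^*) f(\alpha^*))$ as a $B$-combination of the received sub-symbols. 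Because $\{p_i(\alpha^*)\}_{i=1}^t$ is a $B$-basis of $F$, the nondegenerate pairing makes $f(\alpha^*) \mapsto (\tr_{F/B}(p_i(\alpha^*) f(\alpha^*)))_{i=1}^t$ a $B$-linear isomorphism $F \to B^t$, so $f(\alpha^*)$ is recovered by a $B$-linear computation.

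For the direction $(1) \Rightarrow (2)$, I would invoke Theorem~\ref{thm:wlog} to put the scheme in its normalized form, producing for each $\alpha^*$ a $B$-linearly independent set $\{\zeta_1,\ldots,\zeta_t\} \subset F$ and multipliers $\mu_{\alpha,\zeta_i}(\alpha^*) \in \mathrm{span}_B(Q_\alpha(\alpha^*))$ satisfying the $F$-level identity $\zeta_i f(\alpha^*) = \sum_{\alpha \neq \alpha^*} \mu_{\alpha,\zeta_i}(\alpha^*) f(\alpha)$ for all $f \in \cC$. Setting $p_i(\alpha^*) := \zeta_i$ and $p_i(\alpha) := -\mu_{\alpha,\zeta_i}(\alpha^*)$ rewrites this as $\sum_{\alpha \in A} p_i(\alpha) f(\alpha) = 0$, so $p_i \in \cC^\perp$. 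The $p_i(\alpha^*)$ are $B$-independent by construction, so the dimension at $\alpha^*$ is exactly $t$; at each $\alpha \neq \alpha^*$ we have $p_i(\alpha) \in \mathrm{span}_B(Q_\alpha(\alpha^*))$, so $\dim_B\{p_i(\alpha)\}_i \leq |Q_\alpha(\alpha^*)|$, and summing gives the bandwidth bound.

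The main subtlety I expect to handle carefully is the step inside Theorem~\ref{thm:wlog} that promotes a trace-level guarantee (``the scheme computes $f(\alpha^*)$ via $t$ independent trace functionals'') into the $F$-level identity used above. The natural route is: starting from $\tr_{F/B}(\sum_\alpha g_i(\alpha) f(\alpha)) = 0$ for all $f \in \cC$, exploit the $F$-linearity of $\cC$ to replace $f$ by $\beta f$ for arbitrary $\beta \in F$, yielding $\tr_{F/B}(\beta \sum_\alpha g_i(\alpha) f(\alpha)) = 0$ for all $\beta$; nondegeneracy of the trace then forces $\sum_\alpha g_i(\alpha) f(\alpha) = 0$, placing $g_i$ in $\cC^\perp$. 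Once this upgrade is in hand, everything else reduces to straightforward linear-algebra bookkeeping with the trace pairing.
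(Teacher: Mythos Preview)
Your proposal is correct and follows essentially the same route as the paper: the paper proves the theorem by combining Proposition~\ref{thm:wlog} (any linear scheme can be normalized to the Algorithm~\ref{algo:wlog} form, via exactly the $f \mapsto \beta f$ plus trace-nondegeneracy upgrade you describe) with Observation~\ref{prop:polynomials} (the coefficients $\mu_{\alpha,\zeta}$ correspond to dual codewords), which is precisely your $(1)\Rightarrow(2)$; and the $(2)\Rightarrow(1)$ direction in the paper is Proposition~\ref{prop:itworks} instantiated through the same dual-codeword translation, matching your reconstruction argument. The only cosmetic difference is a sign convention in defining the $p_i$ from the $\mu$'s.
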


To prove Theorem~\ref{thm:polynomials}, we begin by showing that any linear repair scheme for MDS codes may as well have a particularly nice form.  More precisely, we will show that it may as well have the form of Algorithm~\ref{algo:wlog}.
\begin{algorithm}
\caption{Framework of generic linear repair scheme for an MDS code $\cC \subset F^n$ over a subfield $B$, so that the degree of $F$ over $B$ is $t$.}
\label{algo:wlog}
\KwIn{A set $A$, a failed node $\alpha^* \in A$, and access to linear queries of the form $\tr_{F/B}(\gamma \cdot f(\alpha))$ for $\alpha \in A \setminus \inset{\alpha^*}$, for some $f \in \cC$.}
\KwOut{The value $f(\alpha^*)$}
 Choose a set $Z \subset F$ of size $t$, which has full rank over $B$.

Choose coefficients $\mu_{\zeta,\alpha}(\alpha^*)$ for $\alpha \in A\setminus \inset{ \alpha^*}$ and $\zeta \in Z$ so that
\begin{equation}\label{eq:thing1}
 \zeta f(\alpha^*) = \sum_{\alpha \in A \setminus \inset{ \alpha^* } } \mu_{\zeta, \alpha}(\alpha^*) f(\alpha). 
\end{equation}

\For {$\zeta \in Z$}{
	Let $\tilde{Q}_\alpha(\alpha^*) \subset F$ be any spanning set for $\inset{ \mu_{\zeta,\alpha}(\alpha^*) \suchthat \zeta \in Z}$ over $B$, and query
	\[ \tr_{F/B}( \gamma \cdot f(\alpha) ) \qquad \text{ for } \gamma \in \tilde{Q}_\alpha(\alpha^*). \]

	Using the $B$-linearity of $\tr_{F/B}$, compute $\tr_{F/B}( \mu_{\zeta,\alpha}(\alpha^*) f(\alpha) )$ for each $\alpha \in A \setminus{\alpha^*}$.
	
	Construct $\tr_{F/B}(\zeta \cdot f(\alpha^*))$ from the identity
\[ \tr_{F/B}(\zeta \cdot  f(\alpha^*)) = \sum_{\alpha \in A \setminus \inset{ \alpha^*} } \tr_{F/B}( \mu_{\zeta,\alpha}(\alpha^*) f(\alpha), \]
	which follows from taking the trace of both sides of \eqref{eq:thing1}.
}

Compute $f(\alpha^*)$ from the data $\inset{\tr_{F/B}(\zeta \cdot f(\alpha^*)) \suchthat \zeta \in Z }$.  More precisely, since $Z = \inset{ \zeta_1,\ldots,\zeta_t}$ are a basis for $F$ over $B$, let $V = \inset{ \nu_1,\ldots,\nu_t}$ be the dual basis.  Then
\[f(\alpha^*) = \sum_{i=1}^t \tr_{F/B}( \zeta_i f(\alpha^*) ) \nu_i.\]

\end{algorithm}
By inspection, it is clear that Algorithm~\ref{algo:wlog} is indeed a linear repair scheme for $\cC$, for any choice of basis $Z$ for $F/B$ and for any coefficients $\mu_{\zeta,\alpha}(\alpha^*)$ so that \eqref{eq:thing1} holds.  We record this fact in the following proposition.
\begin{proposition}\label{prop:itworks}
Algorithm~\ref{algo:wlog} is a linear repair scheme for $\cC$ over $B$ with bandwidth
\[ b = \max_{\alpha^* \in A} \sum_{\alpha \in A \setminus \inset{\alpha^*}} \dim_B(\inset{ \mu_{\zeta,\alpha}(\alpha^*) \suchthat \zeta \in Z }). \]
\end{proposition}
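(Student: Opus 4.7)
The plan is to verify step-by-step that Algorithm~\ref{algo:wlog} satisfies Definition~\ref{def:repair}; this is essentially a sequence of routine checks, and I would organize the proof around the four non-trivial steps of the algorithm.

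First, I would justify that Step 2 is well-defined, i.e., that coefficients $\mu_{\zeta,\alpha}(\alpha^*)$ satisfying \eqref{eq:thing1} exist. Since $\cC$ is linear, $\zeta \cdot (f(\alpha_1),\ldots,f(\alpha_n))$ is again a codeword of $\cC$; since $\cC$ is MDS of dimension $k \leq n-1$, any $k$ of its coordinates determine the rest, so in particular $\zeta f(\alpha^*)$ is an $F$-linear combination of $\{\zeta f(\alpha)\}_{\alpha \neq \alpha^*}$, and pulling the scalar $\zeta$ through the combination produces the required $\mu_{\zeta,\alpha}(\alpha^*)$.

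Next I would verify correctness. The trace identity in the \texttt{for} loop follows by applying the $B$-linear map $\tr_{F/B}$ to both sides of \eqref{eq:thing1}. Each summand $\tr_{F/B}(\mu_{\zeta,\alpha}(\alpha^*) f(\alpha))$ is then computable from the returned queries: by definition $\tilde{Q}_\alpha(\alpha^*)$ spans $\{\mu_{\zeta,\alpha}(\alpha^*) : \zeta \in Z\}$ over $B$, so we may write $\mu_{\zeta,\alpha}(\alpha^*) = \sum_{\gamma \in \tilde{Q}_\alpha(\alpha^*)} c_\gamma \gamma$ with $c_\gamma \in B$, and then $B$-linearity of the trace gives $\tr_{F/B}(\mu_{\zeta,\alpha}(\alpha^*) f(\alpha)) = \sum_\gamma c_\gamma \tr_{F/B}(\gamma f(\alpha))$. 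The final reconstruction step uses the standard fact that the trace form $(x,y) \mapsto \tr_{F/B}(xy)$ is a non-degenerate $B$-bilinear pairing on $F$, so any basis $Z = \{\zeta_1,\ldots,\zeta_t\}$ admits a dual basis $V = \{\nu_1,\ldots,\nu_t\}$ with $\tr_{F/B}(\zeta_i \nu_j) = \delta_{ij}$; expanding $f(\alpha^*) = \sum_j a_j \nu_j$ in $V$ and applying $\tr_{F/B}(\zeta_i \cdot )$ yields $a_i = \tr_{F/B}(\zeta_i f(\alpha^*))$, as claimed.

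Finally, I would count the bandwidth. For each $\alpha \neq \alpha^*$ the algorithm returns one sub-symbol per element of $\tilde{Q}_\alpha(\alpha^*)$, and we may choose $\tilde{Q}_\alpha(\alpha^*)$ as a minimal spanning set, so $|\tilde{Q}_\alpha(\alpha^*)| = \dim_B(\{\mu_{\zeta,\alpha}(\alpha^*) : \zeta \in Z\})$. Summing over $\alpha \in A \setminus \{\alpha^*\}$ and maximizing over $\alpha^*$ gives the stated bandwidth. There is no real obstacle: the proof is a direct consequence of the MDS property (for existence of $\mu$'s), $B$-linearity of the trace (for correctness of the traces), and non-degeneracy of the trace pairing (for the final reconstruction).
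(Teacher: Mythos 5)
Your proof is correct and follows exactly what the paper's "by inspection" remark intends: well-definedness of Step 2 from the MDS property, correctness of the trace identities from $B$-linearity of $\tr_{F/B}$, reconstruction via the non-degeneracy of the trace pairing (dual basis), and the bandwidth count via minimal spanning sets $\tilde{Q}_\alpha(\alpha^*)$. No discrepancy with the paper, which simply asserts the proposition without supplying these routine verifications.
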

Moreoever, any linear repair scheme can be written in the form of Algorithm~\ref{algo:wlog}.
Proposition~\ref{thm:wlog} and the ensuing Corollary~\ref{cor:wlog} make this precise.  
\begin{proposition}\label{thm:wlog}
Suppose there is a linear repair scheme for an MDS code $\cC \subset F$ over $B \leq F$, given by query sets $Q_\alpha(\alpha^*)$ and a linear repair algorithm, as in Definition~\ref{def:repair}.
Then there is a basis $Z$ for $F$ over $B$ so that the following holds.
For each $\alpha^* \in A$ and $\alpha \in A \setminus{\alpha^*}$, and $\zeta \in Z$, there are coefficients $\mu_{\alpha,\zeta}(\alpha^*)$ so that
\[ \zeta f(\alpha^*) = \sum_{\alpha \in A \setminus\inset{\alpha^*}} \mu_{\alpha,\zeta}(\alpha^*) f(\alpha) \]
for all $f \in \cC$,  and so that for all $\alpha \neq \alpha^* \in A$,
\[ \inset{ \mu_{\alpha, \zeta}(\alpha^*) } \subseteq \spn_B( Q_\alpha(\alpha^*) ). \] 
\end{proposition}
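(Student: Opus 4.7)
The plan is to start from the data provided by the linear repair scheme, use a trace-dual basis to extract candidate multipliers $\mu_{\alpha,\zeta}(\alpha^*)$, and then promote the resulting trace identity to a genuine identity in $F$ by combining $F$-linearity of $\cC$ with the non-degeneracy of the trace bilinear form.

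First I would unpack Definition~\ref{def:repair}. For each $\alpha^* \in A$ the scheme fixes a basis $\nu_1,\ldots,\nu_t$ of $F$ over $B$ together with, for every $f \in \cC$, coefficients $\lambda_i(f) \in B$ such that $f(\alpha^*) = \sum_i \lambda_i(f)\nu_i$, where each $\lambda_i(f)$ is a fixed $B$-linear combination of the trace queries $\tr_{F/B}(\gamma f(\alpha))$ over $\alpha \neq \alpha^*$ and $\gamma \in Q_\alpha(\alpha^*)$. Take $Z = \{\zeta_1,\ldots,\zeta_t\}$ to be the trace-dual basis of $\{\nu_1,\ldots,\nu_t\}$, characterized by $\tr_{F/B}(\zeta_i\nu_j) = \delta_{ij}$; this exists because the trace pairing on $F/B$ is non-degenerate. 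Then $\tr_{F/B}(\zeta_i f(\alpha^*)) = \lambda_i(f)$.

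Next, writing $\lambda_i(f) = \sum_{\alpha,\gamma} c_{\alpha,\gamma,i}\tr_{F/B}(\gamma f(\alpha))$ with $c_{\alpha,\gamma,i} \in B$, I would fold each scalar $c_{\alpha,\gamma,i}$ inside the trace using $B$-linearity to obtain $\lambda_i(f) = \tr_{F/B}\bigl(\sum_{\alpha\neq\alpha^*} \mu_{\alpha,\zeta_i}(\alpha^*)\, f(\alpha)\bigr)$ for the choice $\mu_{\alpha,\zeta_i}(\alpha^*) := \sum_{\gamma \in Q_\alpha(\alpha^*)} c_{\alpha,\gamma,i}\,\gamma$. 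By construction $\mu_{\alpha,\zeta_i}(\alpha^*) \in \spn_B(Q_\alpha(\alpha^*))$, as required. Equating the two expressions for $\lambda_i(f)$ gives, for every $f \in \cC$,
\[ \tr_{F/B}\!\left(\zeta_i f(\alpha^*) - \sum_{\alpha \in A\setminus\{\alpha^*\}} \mu_{\alpha,\zeta_i}(\alpha^*)\, f(\alpha)\right) = 0. \]

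The main obstacle, and the crux of the argument, is to strip off the outer trace to get the stronger identity $\zeta_i f(\alpha^*) = \sum_\alpha \mu_{\alpha,\zeta_i}(\alpha^*) f(\alpha)$. The plan is to exploit $F$-linearity of the code: for any $\eta \in F$ the rescaled codeword $\eta f$ is again in $\cC$, and substituting $\eta f$ for $f$ in the display above multiplies the bracketed element by $\eta$. Hence $\tr_{F/B}\bigl(\eta \cdot [\zeta_i f(\alpha^*) - \sum_\alpha \mu_{\alpha,\zeta_i}(\alpha^*) f(\alpha)]\bigr) = 0$ for every $\eta \in F$, and non-degeneracy of the bilinear form $(x,y)\mapsto \tr_{F/B}(xy)$ on $F$ forces the bracketed element to vanish. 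This yields the desired identity. I note that the MDS hypothesis is not actually used in this reduction; only $F$-linearity of $\cC$ is required.
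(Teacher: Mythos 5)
Your proposal is correct and follows essentially the same route as the paper's proof: choose $Z$ to be the trace-dual basis of the $\nu_i$, fold the $B$-scalars inside the trace to define $\mu_{\alpha,\zeta_i}(\alpha^*)$, and then upgrade the resulting trace identity to an identity in $F$ by substituting $\eta f$ for $f$ (using that $\cC$ is an $F$-linear code) together with non-degeneracy of the trace pairing. Your side remark that the MDS hypothesis plays no role here, only $F$-linearity, is also accurate.
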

\begin{proof}
By assumption, the linear repair algorithm computes coefficients $\lambda_i \in B$ so that
\[ f(\alpha^*) = \sum_{i=1}^t \lambda_i \nu_i \]
for some basis $\nu_i$ of $F$ over $B$. 
Since the $\lambda_i$ are $B$-linear functions of the queries, they are of the form
\[\lambda_i = \sum_{\alpha \neq \alpha^*} \sum_{\gamma \in Q_\alpha(\alpha^*)} \beta_{\alpha,\gamma,i} \cdot \tr_{F/B}( \gamma \cdot f(\alpha) ) \]
for some coefficients $\beta_{\alpha,\gamma,i} \in B$. 
Let $\zeta_1,\ldots,\zeta_t$ be the dual basis for $\nu_1,\ldots, \nu_t$, so that $\tr_{F/B}(\zeta_i\nu_\ell) = \ind{i = \ell}$.  Then 
for any $i \leq t$,
\begin{equation}\label{eq:wlog}
\tr_{F/B}(\zeta_i f(\alpha^*)) = 
\lambda_i = 
\sum_{\alpha \neq \alpha^*} \tr_{F/B} \inparen{ \sum_{\gamma \in Q_\alpha(\alpha^*)} \beta_{\alpha,\gamma,i} \gamma f(\alpha) }
=: \sum_{\alpha \neq \alpha^*} \tr_{F/B} \inparen{ \mu_{\alpha,\zeta_i}(\alpha^*) f(\alpha) },
\end{equation}
where \eqref{eq:wlog} is defining the coefficients $\mu_{\alpha,\zeta_i}(\alpha^*)$.
Equation \eqref{eq:wlog} holds for all $f \in \cC$; since $\cC$ is a linear code, it holds also for the function $\gamma \cdot f(\bX)$ for $\gamma \in F$.   This implies that for all $f \in \cC$, and for all $\gamma \in F$, we have
\[ \tr_{F/B}\inparen{ \gamma \cdot \zeta_i f(\alpha^*)} = \tr_{F/B} \inparen{ \gamma \cdot \sum_{ \alpha \in A \setminus \inset{\alpha^*}}{\mu_{\alpha, \zeta_i}(\alpha^*)}\cdot f(\alpha)},\]
which in turn implies 
that for all polynomials $f \in F[\bX]$ of degree less than $k$, 
\[ \zeta_i f(\alpha^*) = \sum_{ \alpha \in A \setminus \inset{\alpha^*}}{\mu_{\alpha, \zeta_i}(\alpha^*)}\cdot f(\alpha).\]
Thus, we have a linear equation of the form required for each $\zeta_i \in Z$.
  Finally, we observe that the coefficients $\mu_{\alpha,\zeta_i}(\alpha^*)$
 live where they are supposed to.  We have 
\[  \mu_{\alpha,\zeta_i}(\alpha^*) = \sum_{\gamma \in Q_{\alpha}(\alpha^*)} \beta_{\alpha,\gamma,i} \cdot \gamma, \]
and so
\[ \inset{ \mu_{\alpha,\zeta_\ell}(\alpha^*) \suchthat \ell = 1,\ldots,t } \subseteq \spn_B{ Q_{\alpha}(\alpha^*) }, \]
as desired.
\end{proof}
\begin{cor}\label{cor:wlog}
Let $B \subset F$ be a subfield and let $A \subset F$ be any set of evaluation points.  Let $k \leq |A|$ be any integer.  
Let $\cC \subset F$ be an MDS code with evaluation points $A$ and dimension $k$.
The following are equivalent.
\begin{itemize}
\item[(1)] There is a linear repair scheme for $\cC$ over $B \leq F$ with bandwidth $b$.
\item[(2)] There is a linear repair scheme for $\cC$ over $B$ of the form of Algorithm~\ref{algo:wlog}, with bandwidth at most $b$.
\end{itemize}
\end{cor}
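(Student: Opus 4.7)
The plan is to obtain the corollary as an essentially immediate consequence of Proposition~\ref{prop:itworks} and Proposition~\ref{thm:wlog}, one for each direction of the equivalence. The only substantive thing to check is that the bandwidth bound is preserved in the $(1)\Rightarrow(2)$ direction, which amounts to comparing a dimension to the size of a spanning set.

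For the $(2)\Rightarrow(1)$ direction, I would simply invoke Proposition~\ref{prop:itworks}: any instantiation of Algorithm~\ref{algo:wlog} is \emph{a fortiori} a linear repair scheme in the sense of Definition~\ref{def:repair}, and its bandwidth is given explicitly by Proposition~\ref{prop:itworks}. So if (2) holds with bandwidth at most $b$, then (1) holds with bandwidth at most $b$.

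For the $(1)\Rightarrow(2)$ direction, start from a linear repair scheme for $\cC$ over $B$ with bandwidth $b$, i.e.\ with query sets $Q_\alpha(\alpha^*)$ satisfying $\max_{\alpha^*}\sum_{\alpha\neq\alpha^*}|Q_\alpha(\alpha^*)|\le b$. Apply Proposition~\ref{thm:wlog} to produce a basis $Z=\{\zeta_1,\dots,\zeta_t\}$ of $F/B$ and coefficients $\mu_{\alpha,\zeta}(\alpha^*)$ witnessing \eqref{eq:thing1}, with the additional property that $\{\mu_{\alpha,\zeta}(\alpha^*):\zeta\in Z\}\subseteq\spn_B(Q_\alpha(\alpha^*))$ for every $\alpha\neq\alpha^*$. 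These $Z$ and $\mu_{\alpha,\zeta}(\alpha^*)$ are exactly the choices needed to run Algorithm~\ref{algo:wlog}. The resulting scheme has bandwidth, by Proposition~\ref{prop:itworks},
\[ \max_{\alpha^*\in A}\sum_{\alpha\neq\alpha^*}\dim_B\bigl(\{\mu_{\alpha,\zeta}(\alpha^*):\zeta\in Z\}\bigr). \]
Since a subspace contained in $\spn_B(Q_\alpha(\alpha^*))$ has dimension at most $|Q_\alpha(\alpha^*)|$, each summand is bounded by $|Q_\alpha(\alpha^*)|$, and the maximum is therefore at most $b$, as required.

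There is no real obstacle; the work was already done in Propositions~\ref{prop:itworks} and~\ref{thm:wlog}. The only thing one has to be careful about is not to confuse the bandwidth of the \emph{original} scheme (measured by $\sum|Q_\alpha(\alpha^*)|$) with that of the \emph{new} scheme (measured by $\sum\dim_B\{\mu_{\alpha,\zeta}(\alpha^*)\}$); the containment in $\spn_B(Q_\alpha(\alpha^*))$ supplied by Proposition~\ref{thm:wlog} is precisely what makes the latter no larger than the former.
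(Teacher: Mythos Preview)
Your proof is correct and follows essentially the same route as the paper: invoke Proposition~\ref{prop:itworks} for $(2)\Rightarrow(1)$, and for $(1)\Rightarrow(2)$ apply Proposition~\ref{thm:wlog} to produce $Z$ and the $\mu_{\alpha,\zeta}(\alpha^*)$, then bound the bandwidth of the resulting Algorithm~\ref{algo:wlog} instance via $\dim_B\{\mu_{\alpha,\zeta}(\alpha^*)\}\le |Q_\alpha(\alpha^*)|$. The paper's argument is identical in structure and content.
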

\begin{proof}
The fact that (2) implies (1) follows from Proposition~\ref{prop:itworks}.  To show that (1) implies (2), 
suppose that there is a linear repair scheme for $RS(A,k)$ with query sets $Q_\alpha(\alpha^*)$.
 Choose the basis $Z$ and the coefficients $\mu_{\alpha,\zeta}(\alpha^*)$ in Algorithm~\ref{algo:wlog} as guaranteed by Proposition~\ref{thm:wlog}.  
By Proposition~\ref{prop:itworks}, the bandwidth of Algorithm~\ref{algo:wlog}, instantiated this way, is
\[ b = \max_{\alpha^* \in A} \sum_{\alpha \in A \setminus \inset{\alpha^*}} \dim_B(\inset{ \mu_{\alpha,\zeta}(\alpha^*) \suchthat \zeta \in Z}). \]
By Proposition~\ref{thm:wlog}, we have, for all $\alpha^*$,
\[ \inset{ \mu_{\alpha,\zeta}(\alpha^*) \suchthat \zeta \in Z } \subseteq \spn_B\inparen{ Q_\alpha(\alpha^*) }, \]
and so for all $\alpha^*$,
\[ \dim_B( \inset{ \mu_{\alpha, \zeta}(\alpha^*) \suchthat \zeta \in Z} ) \leq \inabs{ Q_\alpha(\alpha^*) }. \]
The corollary follows.
\end{proof}

Corollary~\ref{cor:wlog} says that coming up with an exact repair scheme for an MDS codes is equivalent to the problem of coming up with the basis $Z$ and the coefficients $\mu_{\zeta,\alpha}(\alpha^*)$.
It is not hard to see that finding such coefficients is equivalent to finding nice dual codewords.

\begin{obs}\label{prop:polynomials}
Fix a set $A \subset F$ with $|A| = n$, a subfield $B \leq F$ so that $F$ has degree $t$ over $B$, and an integer $k < n$.
Fix $\alpha^* \in A$ and numbers $d_\alpha \leq t$ for each $\alpha \in A \setminus \inset{\alpha^*}$.
Let $\cC \subset F$ be an MDS code with evaluation points in $A$.
The following are equivalent.
\begin{itemize}
\item[(1)]
There is a basis $Z$ for $F$ over $B$ and coefficients 
\[ \inset{\mu_{\alpha, \zeta}(\alpha^*) \suchthat \alpha \in A \setminus \inset{\alpha^*}, \zeta \in Z} \] 
so that for all $f \in \cC$, 
for all $\zeta \in Z$,
\begin{equation}\label{eq:grs}
 \zeta f(\alpha^*) = \sum_{\alpha \in A \setminus \inset{\alpha^*} } \mu_{\alpha, \zeta}(\alpha^*) f(\alpha),
\end{equation}
and for all $\alpha \in A \setminus \inset{ \alpha^*}$,
\[ \dim_B(\inset{ \mu_{\alpha, \zeta}(\alpha^*) \suchthat \zeta \in Z }) = d_\alpha. \]
\item[(2)] There is a set $\mathcal{P}(\alpha^*) \subset \cC^\perp$ of size $t$, so that
\[ \dim_B(\inset{ p(\alpha^*) \suchthat p \in \mathcal{P}(\alpha^*) } ) = t \]
and for all $\alpha \in A \setminus \inset{\alpha^*}$,
\[ \dim_B(\inset{ p(\alpha) \suchthat p \in \mathcal{P}(\alpha^*) }) = d_\alpha. \]
\end{itemize}
\end{obs}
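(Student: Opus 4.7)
The plan is to prove the equivalence by setting up a direct bijection between the tuples $(Z, \{\mu_{\alpha,\zeta}(\alpha^*)\})$ in (1) and the sets $\mathcal{P}(\alpha^*) \subset \cC^\perp$ in (2), using the defining property of the dual code. The key observation is that the identity
\[ \zeta f(\alpha^*) - \sum_{\alpha \in A \setminus \inset{\alpha^*}} \mu_{\alpha,\zeta}(\alpha^*) f(\alpha) = 0 \qquad \text{for all } f \in \cC \]
is precisely the statement that the vector $p_\zeta \in F^A$ defined by
\[ p_\zeta(\alpha^*) = \zeta, \qquad p_\zeta(\alpha) = -\mu_{\alpha,\zeta}(\alpha^*) \text{ for } \alpha \neq \alpha^* \]
belongs to $\cC^\perp$, by the definition of the dual code given in the excerpt.

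For the direction (1) $\Rightarrow$ (2), I would take $\mathcal{P}(\alpha^*) = \{p_\zeta \suchthat \zeta \in Z\}$, where $p_\zeta$ is defined as above. Since the map $\zeta \mapsto p_\zeta(\alpha^*) = \zeta$ is the identity on the basis $Z$, we get $|\mathcal{P}(\alpha^*)| = t$ and $\dim_B\{p(\alpha^*) \suchthat p \in \mathcal{P}(\alpha^*)\} = \dim_B(Z) = t$. For $\alpha \neq \alpha^*$ we have $\{p_\zeta(\alpha) \suchthat \zeta \in Z\} = \{-\mu_{\alpha,\zeta}(\alpha^*) \suchthat \zeta \in Z\}$, which spans the same $B$-subspace as $\{\mu_{\alpha,\zeta}(\alpha^*) \suchthat \zeta \in Z\}$ (negation is a $B$-linear isomorphism), so the dimension is $d_\alpha$ as required.

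For the direction (2) $\Rightarrow$ (1), I would use the hypothesis that the evaluation map $p \mapsto p(\alpha^*)$ on $\mathcal{P}(\alpha^*)$ has image of $B$-dimension $t$ while $|\mathcal{P}(\alpha^*)| = t$. Thus this evaluation map must be injective, and its image $Z := \{p(\alpha^*) \suchthat p \in \mathcal{P}(\alpha^*)\}$ is a basis for $F$ over $B$. Indexing $\mathcal{P}(\alpha^*) = \{p_\zeta\}_{\zeta \in Z}$ by the unique element satisfying $p_\zeta(\alpha^*) = \zeta$, I set $\mu_{\alpha,\zeta}(\alpha^*) = -p_\zeta(\alpha)$. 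The dual-codeword equation $\sum_{\alpha \in A} f(\alpha) p_\zeta(\alpha) = 0$ rearranges to the required identity \eqref{eq:grs}, and the $d_\alpha$-dimension hypothesis on $\{p_\zeta(\alpha)\}$ transfers to $\{\mu_{\alpha,\zeta}(\alpha^*)\}$.

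Both directions really amount to the same bookkeeping calculation, so I do not anticipate a real obstacle. The only subtle point to emphasize is that the hypothesis $\dim_B\{p(\alpha^*) \suchthat p \in \mathcal{P}(\alpha^*)\} = t$ together with $|\mathcal{P}(\alpha^*)| = t$ forces the evaluation at $\alpha^*$ to be injective on $\mathcal{P}(\alpha^*)$, which is what lets us index $\mathcal{P}(\alpha^*)$ by $Z$ in the reverse direction. No use of the MDS hypothesis is needed here beyond making sense of $\cC^\perp$ as the appropriate object to consult; indeed the equivalence is really a statement about any linear code and its dual.
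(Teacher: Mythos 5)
Your proposal is correct and takes essentially the same approach as the paper: both prove the equivalence by directly identifying the tuple $(Z,\{\mu_{\alpha,\zeta}(\alpha^*)\})$ with a collection of dual codewords via the defining orthogonality relation, with only a cosmetic difference in where the minus sign is placed. Your extra remark that $|\mathcal{P}(\alpha^*)| = t$ together with $\dim_B\{p(\alpha^*)\} = t$ forces evaluation at $\alpha^*$ to be injective is a detail the paper leaves implicit, and your observation that the MDS hypothesis is not really used here is also accurate.
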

\begin{proof}
This follows from the definition of duality. For the $(1)\Rightarrow (2)$ implication, given $Z$ and $\mu_{\alpha,\zeta}$, define $p_\zeta:F \to F$ by
$p_{\zeta}(\alpha^*) = -\zeta$ and $p_{\zeta}(\alpha) = \mu_{\alpha,\zeta}(\alpha^*)$, and let $\mathcal{P}(\alpha^*) = \inset{p_\zeta \suchthat \zeta \in Z}$.  For the other direction, given $\mathcal{P}(\alpha^*) = \inset{ p_1,\ldots,p_t} \subset \cC^\perp$, define $\zeta_i = p_i(\alpha^*)$ and let $Z = \inset{ \zeta_1,\ldots,\zeta_t}$.  Then let $\mu_{\alpha, \zeta_i}(\alpha^*) = p_i(\alpha)$. 
\end{proof}
Together, Observation~\ref{prop:polynomials} and Corollary~\ref{cor:wlog} prove Theorem~\ref{thm:polynomials}.

Finally, we apply the reasoning above to Reed-Solomon codes in particular, using the fact that the dual of a Reed-Solomon code is again a generalized Reed-Solomon code.
A \em generalized \em Reed-Solomon (GRS) code is the same as a Reed-Solomon code, except that there is an additional vector of multipliers that specify it:
\begin{definition}\label{def:grs}
The generalized Reed-Solomon (GRS) code of dimension $k$ with evaluation points $A = \inset{ \alpha_0,\ldots,\alpha_{n-1} } \subset F$ and multiplier vector $\lambda \in (F^*)^n$ is given by
\[ GRS(A,k,\lambda) = \inset{ ( \lambda_i f(\alpha_i ) )_{i=0}^{n-1} \suchthat f \in F[\bX], \deg(f) < k }. \]
\end{definition}
It is well-known (see, for example,~\cite[Thm.\,4 in Ch.10, $\S$8]{macwilliams-sloane}) that the dual of a Reed-Solomon code $RS(A,k)$ is a \em generalized \em Reed-Solomon (GRS) code $GRS(A,k,\lambda)$, where the multiplier $\lambda_i$ is given by
\[ \lambda_i = \prod_{j \neq i} (\alpha_i - \alpha_j)^{-1}. \]
Since these multipliers do not affect the dimension of the set $\inset{ p(\alpha)) \suchthat p \in \mathcal{p}(\alpha^*) }$ in the statement of Theorem~\ref{thm:polynomials}, we may leave them out, and we have the following corollary for Reed-Solomon codes.
\begin{cor}\label{cor:polynomials}
Let $B \leq F$ be a subfield so that the degree of $F$ over $B$ is $t$, and let $A \subset F$ be any set of evaluation points.
The following are equivalent.
\begin{itemize}
\item[(1)] 
There is a linear repair scheme for $RS(A,k)$ over $B$ with bandwidth $b$.
\item[(2)]
For each $\alpha^* \in A$, there is a set $\mathcal{P}(\alpha^*) \subset F[\bX]$ be a set of $t$ polynomials of degree less than $n-k$, so that 
\[ \dim_B\inparen{\inset{ p(\alpha^*) \suchthat p \in \mathcal{P}(\alpha^*) }} = t, \]
and the sets $\inset{ p(\alpha) \suchthat p \in \mathcal{P}(\alpha^*)}$ for $\alpha \neq \alpha^*$ satisfy
\[ b \geq \max_{\alpha^* \in A} \sum_{\alpha \in A \setminus \inset{\alpha^*}} \dim_B\inparen{ \inset{ p(\alpha) \suchthat p \in \mathcal{P}(\alpha^*) } } . \]
\end{itemize}
Moreover, suppose that (2) holds.  Then the linear repair scheme in (1) is given by Algorithm~\ref{algo:wlog}, with coefficients
\[ \mu_{\alpha,\zeta}(\alpha^*) = p(\alpha) \cdot \frac{ \prod_{\beta \in A \setminus \inset{\alpha^*}} (\alpha^* - \beta) }{\prod_{\beta \in A \setminus \inset{\alpha}} (\alpha - \beta) } \]
and the basis $Z$ given by
\[ Z = \inset{ p(\alpha^*) \suchthat p \in \mathcal{P}(\alpha^*)}. \]
\end{cor}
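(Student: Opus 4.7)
The plan is to derive Corollary~\ref{cor:polynomials} as a direct specialization of Theorem~\ref{thm:polynomials} to the Reed–Solomon case, using only one extra fact: that the dual of an RS code is a generalized RS code. Applying Theorem~\ref{thm:polynomials} to $\cC = RS(A,k)$, the linear repair schemes of bandwidth $b$ over $B$ are in bijection with collections $\{\mathcal{P}(\alpha^*)\}_{\alpha^* \in A}$ of size-$t$ subsets of $\cC^\perp$ satisfying the dimension conditions of part (2). So the only thing I need to do is translate ``dual codewords $p \in RS(A,k)^\perp$'' into ``polynomials of degree less than $n-k$.''

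To do this, I would invoke the identification $RS(A,k)^\perp = GRS(A, n-k, \lambda)$, where $\lambda_\alpha = \prod_{\beta \in A \setminus \{\alpha\}}(\alpha-\beta)^{-1}$, as stated in the excerpt. Every $p \in \cC^\perp$ is of the form $p(\alpha) = \lambda_\alpha q(\alpha)$ for a unique polynomial $q$ of degree less than $n-k$, and this correspondence is a bijection between $\cC^\perp$ and the space $F[\bX]_{<n-k}$ of such polynomials. The key observation is that for each fixed $\alpha \in A$, scalar multiplication by $\lambda_\alpha \in F^\times$ is a $B$-linear bijection of $F$, hence dimension-preserving. Consequently, for any subset $\mathcal{P}(\alpha^*) \subset \cC^\perp$ with associated polynomial set $\mathcal{P}'(\alpha^*) \subset F[\bX]_{<n-k}$, we have $\dim_B\{p(\alpha) : p \in \mathcal{P}(\alpha^*)\} = \dim_B\{q(\alpha) : q \in \mathcal{P}'(\alpha^*)\}$ for every $\alpha \in A$. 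Thus the dimension conditions in Theorem~\ref{thm:polynomials}(2) translate verbatim into conditions on sets of polynomials, yielding the equivalence.

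For the ``moreover'' clause, I would trace through the constructive direction of Observation~\ref{prop:polynomials} once the dual–polynomial dictionary is in place. Given $p \in \mathcal{P}(\alpha^*)$ of degree less than $n-k$, the dual relation reads
\[ \sum_{\alpha \in A} \lambda_\alpha p(\alpha) f(\alpha) = 0 \qquad \text{for all } f \in RS(A,k). \]
Isolating the $\alpha^*$ term and dividing by $-\lambda_{\alpha^*}$ yields an identity of the form $\zeta f(\alpha^*) = \sum_{\alpha \neq \alpha^*} \mu_{\alpha,\zeta}(\alpha^*) f(\alpha)$, where $\zeta$ is a (fixed, nonzero) $F$-multiple of $p(\alpha^*)$ and $\mu_{\alpha,\zeta}(\alpha^*)$ is $p(\alpha)$ times the ratio $\lambda_\alpha/\lambda_{\alpha^*}$. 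Plugging in the explicit formula for $\lambda_\alpha$ gives exactly
\[ \frac{\lambda_\alpha}{\lambda_{\alpha^*}} = \frac{\prod_{\beta \in A \setminus \{\alpha^*\}}(\alpha^* - \beta)}{\prod_{\beta \in A \setminus \{\alpha\}}(\alpha - \beta)}, \]
matching the stated expression for $\mu_{\alpha,\zeta}(\alpha^*)$. Since rescaling each $\zeta$ by a fixed nonzero element of $F$ leaves $Z$ a basis and leaves $\mathcal{P}(\alpha^*)$ a set of polynomials of the correct dimension properties, we may identify $Z$ with $\{p(\alpha^*) : p \in \mathcal{P}(\alpha^*)\}$ as claimed.

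The main obstacle is essentially bookkeeping: carefully tracking the multiplier $\lambda_\alpha$ through the $B$-linear/dimension arguments and confirming that the resulting formula for $\mu_{\alpha,\zeta}(\alpha^*)$ agrees with the one in the statement (up to the fact that the scheme is insensitive to the irrelevant $\lambda_{\alpha^*}$ rescaling of $\zeta$). No further ideas beyond Theorem~\ref{thm:polynomials} and the RS-dual identification are required.
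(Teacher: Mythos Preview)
Your proposal is correct and follows essentially the same route as the paper: apply Theorem~\ref{thm:polynomials} to $\cC = RS(A,k)$, invoke the identification $RS(A,k)^\perp = GRS(A,n-k,\lambda)$ with $\lambda_\alpha = \prod_{\beta \neq \alpha}(\alpha-\beta)^{-1}$, and observe that multiplying by the nonzero scalar $\lambda_\alpha$ is a $B$-linear bijection of $F$ and hence leaves all the $B$-dimensions unchanged. The paper's justification is terser (it simply says the multipliers ``do not affect the dimension'' and states the corollary), whereas you have also spelled out the derivation of the explicit $\mu_{\alpha,\zeta}(\alpha^*)$ formula in the ``moreover'' clause; this extra bookkeeping is exactly right.
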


Thus, the task of finding repair schemes for RS codes boils down to choosing some particularly nice polynomials $\mathcal{P}$.  In the next section, we show several examples of how to pick these polynomials.  In particular, for $A=F$ we choose $\mathcal{P}$ to be trace polynomials, and obtain an optimal linear repair scheme for $RS(F,k)$ for any $k$.

\section{Constructions for RS codes}\label{sec:upper} 
By Corollary~\ref{cor:polynomials}, a linear repair scheme for Reed-Solomon codes can be specified by choosing evaluation points $A$, and, for each $\alpha^* \in A$, a set $\mathcal{P}(\alpha^*)$ of polynomials.  
In this section, we will make these choices in a few different parameter regimes.  First in Section~\ref{sec:fulllen} we will choose $A = F$, and prove Theorem~\ref{thm:fulllen}, giving an optimal linear repair scheme for general high-rate RS codes.  Next, in Section~\ref{sec:bigt}, we will give an example of a code where $n \ll |F|$ is much smaller.  
Finally in Section~\ref{sec:hdfs}, we will consider a concrete example, and give an improved repair scheme for the specific RS code used in the HDFS-RAID module, deployed by Facebook and studied in~\cite{xoringelephants, scalarmds}.

\subsection{When $A=F$: Optimal Linear Repair Schemes}\label{sec:fulllen}
In this section, we will choose $A = F$ to be the entire field and prove Theorem~\ref{thm:fulllen}.   For the reader's convenience, we restate the theorem below.
\begin{theorem*}[Theorem~\ref{thm:fulllen}]
Let $B \leq F$ be any subfield of $F$, and choose $k \leq n(1 - 1/|B|)$.  Then there is a linear exact repair scheme for $RS(F, k)$
with repair bandwidth $n-1$ over $B$.
\end{theorem*}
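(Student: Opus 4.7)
My plan is to invoke Corollary~\ref{cor:polynomials} and exhibit, for each $\alpha^*\in F$, a set $\mathcal P(\alpha^*)$ of $t=\log_{|B|}|F|$ low-degree polynomials with full $B$-rank at $\alpha^*$ and $B$-rank $\le 1$ at every other $\alpha$. The right objects are "trace polynomials" in the variable $x-\alpha^*$. Concretely, let $q=|B|$ (so $n=q^t$), fix a $B$-basis $Z=\{\zeta_1,\dots,\zeta_t\}$ of $F$, and for each $\zeta\in Z$ define
\[
p_\zeta(x) \;=\; \frac{\tr_{F/B}\bigl(\zeta\,(x-\alpha^*)\bigr)}{x-\alpha^*}.
\]

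First I would check that each $p_\zeta$ is genuinely a polynomial of degree less than $n-k$. Expanding the trace,
\[
\tr_{F/B}\bigl(\zeta(x-\alpha^*)\bigr) \;=\; \sum_{i=0}^{t-1}\zeta^{q^i}\bigl(x^{q^i}-(\alpha^*)^{q^i}\bigr),
\]
and since $(x-\alpha^*)$ divides $x^{q^i}-(\alpha^*)^{q^i}$ in $F[x]$ for each $i$, we obtain a polynomial whose degree is at most $q^{t-1}-1=n/q-1$. The hypothesis $k\le n(1-1/|B|)$ is exactly what guarantees $n/q-1 < n-k$, so $p_\zeta\in \cC^\perp$ when interpreted via Corollary~\ref{cor:polynomials}.

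Next I would compute the evaluations. Using $(x^{q^i}-(\alpha^*)^{q^i})/(x-\alpha^*)=\sum_{j=0}^{q^i-1}x^{q^i-1-j}(\alpha^*)^j$ and substituting $x=\alpha^*$ gives $q^i\,(\alpha^*)^{q^i-1}$; in characteristic $p$ (with $q$ a power of $p$) this vanishes for $i\ge 1$ and equals $1$ for $i=0$. Hence $p_\zeta(\alpha^*)=\zeta$, so
\[
\{p_\zeta(\alpha^*):\zeta\in Z\} \;=\; Z,
\]
which has $B$-dimension $t$, satisfying the first condition of Corollary~\ref{cor:polynomials}. On the other hand, for $\alpha\ne\alpha^*$,
\[
p_\zeta(\alpha) \;=\; \frac{\tr_{F/B}\bigl(\zeta(\alpha-\alpha^*)\bigr)}{\alpha-\alpha^*} \;\in\; \frac{1}{\alpha-\alpha^*}\cdot B,
\]
since the trace lies in $B$. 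Therefore as $\zeta$ ranges over $Z$, the $t$ values $p_\zeta(\alpha)$ all lie in the one-dimensional $B$-subspace $(\alpha-\alpha^*)^{-1}B\subseteq F$, so $\dim_B\{p_\zeta(\alpha):\zeta\in Z\}\le 1$.

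Summing over the $n-1$ nodes $\alpha\ne\alpha^*$ gives repair bandwidth at most $n-1$, and Corollary~\ref{cor:polynomials} then yields the desired linear exact repair scheme over $B$. The only subtle step is the evaluation of $p_\zeta$ at $\alpha^*$, which I expect to be the main place to be careful: it hinges on the characteristic-$p$ cancellation that kills all Frobenius-power terms except the $i=0$ term, and it is what makes the basis $Z$ reappear on the left-hand side of the repair identity \eqref{eq:thing1} (via $\zeta\,f(\alpha^*)$). Everything else is a degree count and a one-line $B$-linear alignment argument.
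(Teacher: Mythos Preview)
Your proposal is correct and follows essentially the same approach as the paper: both invoke Corollary~\ref{cor:polynomials} with the identical choice $\mathcal P(\alpha^*)=\{\tr_{F/B}(\zeta(\bX-\alpha^*))/(\bX-\alpha^*):\zeta\in Z\}$, verify the degree bound, and check full rank at $\alpha^*$ versus rank one elsewhere. Your write-up is in fact more explicit than the paper's at the evaluation $p_\zeta(\alpha^*)=\zeta$ (the paper simply asserts it, relying implicitly on the linearized-polynomial structure $\tr_{F/B}(\zeta Y)=\zeta Y+\zeta^q Y^q+\cdots$, so that division by $Y=\bX-\alpha^*$ leaves constant term $\zeta$), whereas you spell out the characteristic-$p$ cancellation directly.
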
 
\begin{proof}[Proof of Theorem~\ref{thm:fulllen}]
We will choose a set $\mathcal{P}(\alpha^*)$ of polynomials of degree less than $n/|B|$ for each $\alpha^* \in F$, so that the conditions of Corollary~\ref{cor:polynomials} hold. 
Fix any basis $Z \subset F$ for $F$ over $B$, and choose
\[ \mathcal{P}(\alpha^*) = \inset{ \frac{ \tr_{F/B} \inparen{ \zeta ( \bX - \alpha^* ) }}{ \bX - \alpha^* } \suchthat \zeta \in Z}. \]
Notice that these indeed have degree $|B|^{t-1} - 1 = |F|/|B| - 1 < n/|B|$.
Then, for all $\alpha \neq \alpha^*$, we have
\[ \inset{ p(\alpha) \suchthat p \in \mathcal{P}(\alpha^*) } \subset \inset{ \frac{\beta}{ \alpha - \alpha^* } \suchthat \beta \in B }, \]
and in particular this has dimension $1$ over $B$.  On the other hand, 
\[p(\alpha^*) = \left. \frac{\tr_{F/B}( \zeta(\bX - \alpha^*) )}{ \bX - \alpha^* } \right|_{\bX = \alpha^*} 
= \zeta. \]
Thus
\[ \inset{ p(\alpha^*) \suchthat p \in \mathcal{P}(\alpha^*) } = Z, \]
which is by definition full rank.  Thus, the conditions of Corollary~\ref{cor:polynomials} are satisfied, and the bandwidth of the resulting scheme is
\[ b = \sum_{\alpha \neq \alpha^*} \dim_B\inset{ p(\alpha) \suchthat p \in \mathcal{P}(\alpha^*) } = \sum_{\alpha \neq \alpha^*} 1 = n-1. \]
This proves Theorem~\ref{thm:fulllen}.
\end{proof}

\subsection{Large Field Sizes: Example Construction}\label{sec:bigt}
Theorem~\ref{thm:optimal} implies that our construction in Theorem~\ref{thm:fulllen} is optimal for Reed-Solomon codes. 
However, 
as we discussed in Section~\ref{sec:prelim}, the assumption that $A = F$ restricts $t \leq \log_2(n)$.  This is beneficial in some respects: if the extension field $F$ over $B$ has smaller degree, it is easier to implement in practice, especially when compared to constructions with $t = 2^n$.  However, when $t$ is large compared to $n-k$, this moves us to the regime where~\eqref{eq:bigt} is the binding lower bound, rather than~\eqref{eq:trivial}.  
In this large-$t$ regime, it's possible that the ratio of the bandwidth $b$ (the number of subsymbols downloaded) to $t$ (the number of subsymbols to be recovered) could tend to a constant.  On the other hand, in the small-$t$ regime where~\eqref{eq:trivial} is the binding constraint, this ratio must tend to infinity.  Thus, allowing $t$ to get large (increasing the level of sub-packetization) can improve the bandwidth in this sense.
We remark that while the trade-off between~\eqref{eq:bigt} and~\eqref{eq:trivial} occurs at $t = n-k$, it is conjectured in~\cite{twb14} that in fact $t$ must be exponentially large in $k$ in order for~\eqref{eq:bigt} to be attainable, and it's known that $t$ must be at least $k^2$~\cite{gtc}. 

With this trade-off in mind, we show how to use our framework to construct non-trivial linear repair schemes for RS codes with $A \subset F$ much smaller than the entire field.   This section is meant as a proof-of-concept; while our results are non-trivial, they are far from the bound~\eqref{eq:bigt}, and indeed the aformentioned result of~\cite{gtc} implies that with the particular parameters of our construction below, we cannot hope to attain that.

In a preliminary version of this work, we left it as an open question to take advantage of large $t$ with our approach.  In subsequent work~\cite{YB16} Ye and Barg were able to do this, and show how to meet~\eqref{eq:bigt} when $t$ is sufficiently large.  In that construction, $t = (n-k)^n$.
It is still an interesting open question how small one can take $t$ in order to come up with RS codes meeting~\eqref{eq:bigt}.

\begin{theorem}\label{thm:bigt}
Suppose that $F = GF(2^s)$ for even $s$.  Choose any even $n \leq 2( \sqrt{|F|} - 1)$.  
There is a set of $n$ evaluation points $A$ so that for any $k \leq n-2$, $RS(A,k)$
 admits a linear exact repair scheme over $B = GF(2^{s/2})$ so that the bandwidth in bits is at most
\[ b\log_2(|B|) \leq \frac{3}{4} sn. \]
In particular, choosing $k = n-2$, we have a extremely high-rate code with bandwidth
\[ b\log_2(|B|) \leq \inparen{\frac{3}{4} + o(1) } sk. \]
Notice that the naive scheme has bandwidth $sk$ bits, and \eqref{eq:bigt} gives a lower bound of $sk/2$ bits.
\end{theorem}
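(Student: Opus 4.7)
The plan is to invoke Corollary~\ref{cor:polynomials} with $t = [F:B] = 2$, choosing the evaluation set $A$ and the polynomial collections $\mathcal{P}(\alpha^*)$ so that at roughly half of the surviving nodes the two evaluations $\{p(\alpha) : p \in \mathcal{P}(\alpha^*)\}$ collapse into a $1$-dimensional $B$-subspace (one subsymbol of bandwidth), while at the other half they span all of $F$ (two subsymbols). Since the repair polynomials must have degree strictly less than $n-k \geq 2$, I have only degree-$\leq 1$ polynomials to work with, which actually suffices with the right additive structure on $A$.

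First I would fix $\gamma \in F \setminus B$, so that $\{1,\gamma\}$ is a $B$-basis of $F$ and $F = B \oplus \gamma B$ as a $B$-vector space. Since $n$ is even and $n/2 \leq \sqrt{|F|}-1 = |B|-1$, I can partition the evaluation set as $A = A_1 \sqcup A_2$ with $|A_1| = |A_2| = n/2$, $A_1 \subset B$, and $A_2 \subset \gamma + B$; both $B$ and $\gamma + B$ have $|B|$ elements, leaving ample choice. Then I would define
\[
\mathcal{P}(\alpha^*) =
\begin{cases}
\{\,1,\; \bX - \gamma\,\}, & \alpha^* \in A_1, \\
\{\,1,\; \bX\,\}, & \alpha^* \in A_2.
\end{cases}
\]
Every polynomial here has degree at most $1 < 2 \leq n-k$, so the degree hypothesis of Corollary~\ref{cor:polynomials} is satisfied simultaneously for all $k \leq n-2$, using the \emph{same} set $A$.

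The verifications are short. At $\alpha^*$ itself, the set $\{p(\alpha^*) : p \in \mathcal{P}(\alpha^*)\}$ equals either $\{1,\alpha^*-\gamma\}$ (for $\alpha^* \in A_1 \subset B$) or $\{1,\alpha^*\}$ (for $\alpha^* \in A_2 \subset \gamma + B$); in either case the second entry lies outside $B$, so $\dim_B = 2 = t$, as required. For $\alpha \ne \alpha^*$, the evaluation set collapses to dimension $1$ precisely when the entry outside the ``$1$'' lies in $B$: for $\alpha^* \in A_1$ this happens for $\alpha \in A_2$ (where $\alpha - \gamma \in B$), and for $\alpha^* \in A_2$ this happens for $\alpha \in A_1$ (where $\alpha \in B$). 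The remaining $n/2 - 1$ nodes, sitting in the same coset as $\alpha^*$, each contribute dimension $2$. Summing gives
\[
\sum_{\alpha \ne \alpha^*} \dim_B\{p(\alpha):p\in\mathcal{P}(\alpha^*)\} \;\leq\; 2\bigl(\tfrac{n}{2}-1\bigr) + \tfrac{n}{2} \;=\; \tfrac{3n}{2}-2,
\]
so Corollary~\ref{cor:polynomials} yields a linear exact repair scheme of bandwidth $b \leq 3n/2$ subsymbols, i.e.\ $b\log_2|B| \leq (3n/2)(s/2) = \tfrac{3}{4}sn$ bits. The asymptotic statement for $k=n-2$ follows because $sn = sk + 2s = (1+o(1))sk$.

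The ``hard part'' here is really just finding the right structural ansatz: once one sees that $A$ should be a union of two cosets of $B$ inside $F$, and that the affine shifts $\bX - \gamma$ and $\bX$ are exactly what kills one coset in each case, the rest is elementary bookkeeping. There is no deep step beyond the characterization already established in Corollary~\ref{cor:polynomials}.
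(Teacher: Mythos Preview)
Your proof is correct and follows essentially the same strategy as the paper: split $A$ into two halves living in two ``cosets'' of $B$ inside $F$, and use linear polynomials in $\mathcal{P}(\alpha^*)$ chosen so that the non-constant entry lands in $B$ exactly on the half not containing $\alpha^*$. The only cosmetic difference is that you use \emph{additive} cosets $B$ and $\gamma+B$ (with polynomials $\{1,\bX-\gamma\}$ and $\{1,\bX\}$), whereas the paper uses \emph{multiplicative} cosets $B^*$ and $\gamma B^*$ (with polynomials $\{1,\gamma^{-1}\bX\}$ and $\{1,\bX\}$); both choices give the same bandwidth count $3n/2 - 2$ subsymbols and hence the claimed $\tfrac{3}{4}sn$ bits.
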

\begin{proof}
Let $\gamma$ be a primitive element of $F$.  Choose $A$ to consist of $n/2$ points from $B^*$ and $n/2$ points from $\gamma B^*$.
Choose the polynomials
\[ \mathcal{P}(\alpha^*) = \begin{cases} \inset{ 1, \bX } & \alpha^* \in \gamma B^* \\
\inset{ 1, \gamma^{-1}\bX } & \alpha^* \in B^* \end{cases} \]
It is easy to check that in either case, the set $\inset{p(\alpha^*) \suchthat p \in \mathcal{P}(\alpha^*) }$ has full rank, and that for all $\alpha \neq \alpha^*$, we have $\dim_B (\inset{ p(\alpha) \suchthat p \in \mathcal{P}(\alpha^*) } ) = 1$ whenever $\alpha \not\in \alpha^* B^*$.  
Finally, the polynomials in $\mathcal{P}(\alpha^*)$ are linear, and so by Corollary~\ref{cor:polynomials}, as long as $k \leq n-2$, this gives a linear exact repair scheme with bandwidth (in bits)
\[ \log(|B|) \inparen{ \inparen{ \frac{n}{2} - 2} + 2\inparen{\frac{n}{2} - 1} } = \frac{s}{2} \inparen{  \frac{3n}{2} - 1 }. \qedhere\]
\end{proof}

\subsection{A Specific Example: a $(14,10)$-GRS Code.}\label{sec:hdfs}
In this section, we give a linear exact repair scheme for the generalized Reed-Solomon code (see Definition~\ref{def:grs}) currently deployed in the Facebook Hadoop Analytics cluster. 
 Notice that for the exact repair problem, an exact repair scheme for
$RS(A, k)$ gives an exact repair scheme for $GRS(A, k, \lambda)$ for any $\lambda$.  Indeed, the $i$'th node holds the symbol $\lambda_i f(\alpha_i)$, and knows $\lambda_i$, so it also can compute $f(\alpha_i)$.

The HDFS-RAID~\cite{HDFS} module is an open-source module which implements coding for distributed storage in the Apache Hadoop Distributed File System (HDFS).
 This module allows for the use of Reed-Solomon code, and it implements a particular (14,10)-GRS code; this is currently deployed in the Facebook Hadoop Analytics cluster~\cite{xoringelephants}.
This GRS code was used as a benchmark in~\cite{xoringelephants} for comparison with novel regenerating storage schemes, and~\cite{scalarmds} improves on
the naive repair scheme for this GRS code (the naive scheme is the one implemented in the module). 
The latter work gives a non-trivial exact repair scheme for this particular code which can recover the systematic nodes (that is, the 10 out of the 14 nodes interpreted as holding the original data).

Using our characterization, it was quick to produce (via a computer search) a scheme that performs better than that of~\cite{scalarmds} and also which can recover all of the nodes, not only the systematic ones.  We give the details of the code, our search, and our results below.

The HDFS-RAID module (see~\cite{HDFS}, code at~\cite{github}, and the explicit generating matrix given in~\cite{scalarmds}) implements a GRS code over $F = GF(2^8)$ defined as follows.  Let $\zeta$ be a primitive element of $F$ (more precisely, $\zeta$ is a root of the primitive polynomial $1 + x^2 + x^3 + x^4 + x^8$).
The code is given by
\[ \cC = \inset{ (c_0,c_1,\ldots,c_{13}) \suchthat c(1) = c(\zeta) = c(\zeta^2) = c(\zeta^3) = 0 }, \]
where $c(\bX) = \sum_{i=0}^{13} c_i \bX^i$.
It is easiest to describe $\cC$ as above (using the dual formulation), but it is not hard to verify that $\mathcal{C}$ is indeed a GRS code: $\cC = GRS(A,k,\lambda)$ with evaluation points
\[ A = \inset{ 1,\zeta,\zeta^2,\ldots,\zeta^{13} } \]
and some vector $\lambda$.  As mentioned above, for the exact repair problem, only the evaluation points $A$ matter, and any exact repair scheme for $RS(A,k)$ will give an exact repair scheme for $\cC = GRS(A,k,\lambda)$.

The work~\cite{scalarmds} gives an improved scheme for $\mathcal{C}$.  More precisely, they show how to recover each of the $10$ systematic nodes using bandwidth at most $65$ bits; the naive bound is $80$ bits, and the lower bound~\eqref{eq:bigt} is $20$ bits.  We give a scheme which uses at most $64$ bits per node, and additionally can recover from any failure, not just the failure of a systematic node.  
We give a linear repair scheme over $B = GF(2^4)$.  Such a scheme is specified
by a choice of two degree-$3$ polynomials over $GF(2^8)$ for each $\alpha^* \in A$.
Our polynomials are given in Appendix~\ref{app:hdfs}, and our code can be found at \url{http://sites.google.com/site/marywootters/exhaust_FB.sage}. 

To find this scheme, we searched over all such polynomials which had three distinct roots in $A$; the reason for this assumption was to speed up the search, with the intuition that a value $p(\alpha) = 0$ for $\alpha \in A$ automatically reduces the dimension of the set $\inset{p(\alpha) \suchthat \alpha \in A}$.  This was a reasonably quick search and it produced a good solution to the exact repair problem for this particular code.  However, both the assumption about the roots and the large size of $B$ potentially limit the performance of this code; these assumptions were made so that naive search would be fast.  It is an interesting and important question if given evaluation points $A$ and a base field $B$, one can (sometimes) efficiently find a (near-)optimal linear exact recovery scheme for $RS(A,k)$ over $B$.

\section{Lower bounds}\label{sec:lower}
In this section, we prove Theorem~\ref{thm:optimal}, which gives a lower bound exactly matching Theorem~\ref{thm:fulllen}.
For the reader's convenience, we restate Theorem~\ref{thm:optimal} here.
\begin{theorem*}[Theorem~\ref{thm:optimal}]
Let $\mathcal{C}$ be an MDS code of dimension $k$ with evaluation points $A$ over a field $F$.  Let $B \leq F$ be a subfield.  Any linear repair scheme for $\cC$ over $B$ must have bandwidth (measured in subsymbols of $B$) at least
\[ b \geq (n-1) \log_{|B|} \inparen{ \frac{n-1}{n-k} }. \]
In particular, the bandwidth (measured in bits) for any linear repair scheme over any base field $B$ is at least
\[ b\log_2(|B|) \geq (n-1) \log_2\inparen{ \frac{n-1}{n-k} }. \]
\end{theorem*}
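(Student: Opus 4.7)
By Theorem~\ref{thm:polynomials}, a linear repair scheme with bandwidth $b$ is equivalent to a choice, for each $\alpha^{*}\in A$, of a set $\mathcal{P}(\alpha^{*})\subset\mathcal{C}^{\perp}$ of size $t$ with $\dim_{B}\{p(\alpha^{*}):p\in\mathcal{P}(\alpha^{*})\}=t$, and then $b\geq\sum_{\alpha\neq\alpha^{*}}d_{\alpha}$ where $d_{\alpha}=\dim_{B}\{p(\alpha):p\in\mathcal{P}(\alpha^{*})\}$. The plan is to lower bound $\sum_{\alpha\neq\alpha^{*}}d_{\alpha}$ using two ingredients: (i) the fact that $\mathcal{C}^{\perp}$ is MDS (dual of MDS), and (ii) a convexity/Jensen argument to convert a natural sum-of-exponentials bound into a sum-of-$d_{\alpha}$ bound.

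First I would form $W(\alpha^{*}):=\mathrm{span}_{B}(\mathcal{P}(\alpha^{*}))$, which is a $B$-subspace of $\mathcal{C}^{\perp}$. Because the $t$ values $\{p(\alpha^{*})\}_{p\in\mathcal{P}(\alpha^{*})}$ are $B$-linearly independent in $F$, the evaluation-at-$\alpha^{*}$ map $W(\alpha^{*})\to F$ is surjective, and a dimension count forces $\dim_{B}W(\alpha^{*})=t$ with that map being a $B$-linear \emph{isomorphism}. The crucial consequence is that every nonzero $p\in W(\alpha^{*})$ satisfies $p(\alpha^{*})\neq 0$. Since the dual code is MDS of dimension $n-k$, every such nonzero $p$ has Hamming weight at least $k+1$; combined with $p(\alpha^{*})\neq 0$, it has at most $n-k-1$ zero coordinates among $\alpha\in A\setminus\{\alpha^{*}\}$.

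Next I would double-count the set $S=\{(p,\alpha):p\in W(\alpha^{*}),\ \alpha\neq\alpha^{*},\ p(\alpha)=0\}$. Counting by $\alpha$, the kernel of evaluation at $\alpha$ has $B$-dimension $t-d_{\alpha}$, giving $|S|=\sum_{\alpha\neq\alpha^{*}}|B|^{t-d_{\alpha}}$. Counting by $p$, the zero polynomial contributes $n-1$, and each of the $|B|^{t}-1$ nonzero elements contributes at most $n-k-1$ by the MDS argument above. Combining and dividing by $|B|^{t}$,
\[
\sum_{\alpha\neq\alpha^{*}}|B|^{-d_{\alpha}}\ \leq\ (n-k-1)+\frac{k}{|B|^{t}}\ \leq\ n-k,
\]
where the last inequality uses $k\leq|F|=|B|^{t}$ (automatic since $k\leq n\leq|F|$ for any MDS code).

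Finally, since $x\mapsto|B|^{-x}$ is convex, Jensen's inequality applied to the $n-1$ values $d_{\alpha}$ gives $\sum_{\alpha\neq\alpha^{*}}|B|^{-d_{\alpha}}\geq(n-1)|B|^{-\bar d}$ with $\bar d=(n-1)^{-1}\sum_{\alpha\neq\alpha^{*}}d_{\alpha}$. Combining with the upper bound yields $\bar d\geq\log_{|B|}((n-1)/(n-k))$, and multiplying by $n-1$ gives the bound on $b$ in subsymbols; the bit version follows by multiplying by $\log_{2}|B|$ and noting $(n-1)/n\leq 1$. The main obstacle I expect is not any single step in isolation, but the observation that evaluation at $\alpha^{*}$ is an isomorphism on $W(\alpha^{*})$; this is what makes the MDS weight bound applicable uniformly across all of $W(\alpha^{*})\setminus\{0\}$ and pins down the leading constant $(n-1)$, which a naive argument over $\mathcal{P}(\alpha^{*})$ alone would not achieve.
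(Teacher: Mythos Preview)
Your proposal is correct and follows essentially the same approach as the paper. The paper parameterizes $W(\alpha^*)$ by $x\in B^t$ via $p_x=\sum_p x_p\,p$, then averages the number of zeros of $p_x$ over nonzero $x$ and applies pigeonhole to extract a single $p_{x^*}$ whose zero count is bounded by the MDS property of $\mathcal{C}^\perp$; your double-counting of $\{(p,\alpha):p(\alpha)=0\}$ is the same computation repackaged, and both finish with the identical convexity/Jensen step to pass from $\sum_\alpha |B|^{-d_\alpha}\le n-k$ to the stated bound.
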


\begin{proof}[Proof of Theorem~\ref{thm:optimal}] 
Fix any $\alpha^* \in A$, and 
consider any linear exact repair scheme which repairs the node corresponding to $\alpha^*$ using $b$ sub-symbols from $B$.
By Theorem~\ref{thm:polynomials}, there is some set $\mathcal{P} \subset \cC^\perp$ of size $t$ so that 
\begin{equation}\label{eq:fullrank}
 \inset{ p(\alpha^*) \suchthat p \in \mathcal{P} }  
\end{equation}
has full rank over $B$, and so that
\begin{equation}\label{eq:smallrank}
 \dim_B (\inset{ p(\alpha) \suchthat p \in \mathcal{P} }) = d_\alpha
\end{equation}
where
\[ b = \sum_{\alpha \in A \setminus \inset{\alpha^*}} d_\alpha. \]
For any vector $x \in B^t$ (indexed by the functions $p \in \mathcal{P}$) let $p_x: F \to F$ be 
\[ p_x(\bX) = \sum_{p \in \mathcal{P}} x_p \cdot p(\bX). \]
Let
\[ S_\alpha = \inset{ x \in B^t \suchthat p_x(\alpha) = 0 }. \]
By \eqref{eq:smallrank}, $S_\alpha$ is a vector space over $B$ of dimension $t - d_\alpha$.
Thus, on average over all nonzero $x \in B^t$, we have
\begin{align*}
 \frac{1}{|B|^t - 1}\sum_{x\neq0,x \in B^t} |\inset{ \alpha \in A \setminus \inset{\alpha^*} \suchthat x \in S_\alpha }| &= \frac{1}{|B|^t - 1} \sum_{\alpha \in A \setminus \inset{\alpha^*}} |B|^{t - d_\alpha} \\
&= \inparen{ \frac{ |F| }{|F| - 1} }\sum_{\alpha \in A \setminus \inset{\alpha^*}} |B|^{-d_\alpha} \\
&=: r.
\end{align*}
In particular, there exists some $x^* \in B^t$ so that
$ |\inset{\alpha \suchthat x^* \in S_\alpha}| \geq r. $
Consider 
\[ p^*(\bX) = \sum_{p \in \mathcal{P}} x^*_p p(\bX). \]
By the choice of $x^*$, $p^*$ vanishes on at least $r$ points of $A \setminus \inset{ \alpha^*}$.  
Notice also that $p^*$ is nonzero.  Indeed, if it were zero, then 
\[ \sum_{p \in \mathcal{P}} x^*_p p(\alpha^*) = 0, \]
contradicting~\eqref{eq:fullrank}.
However, $p^* \in \cC^\perp$, since $\cC^\perp$ is a linear code.  Since $\cC$ is an MDS code, so is $\cC^\perp$, and in particular, the distance is $n-k$.  Since $p^*$ is nonzero, this implies that $r < n-k$, which implies that
\[\sum_{\alpha \in A \setminus \inset{\alpha^*}} |B|^{-d_\alpha} < n-k. \]
Thus, we have the bound on bandwidth
\[ b \geq \min_{d_\alpha \in [0,t]} \sum_{\alpha \in A \setminus \inset{\alpha^*}} d_\alpha  \qquad s.t. \qquad \sum_{\alpha \in A \setminus \inset{\alpha^*}} |B|^{-d_\alpha} < n-k. \]
The minimum occurs when are $d_\alpha$ are balanced, and equal to $\log_{|B|}\inparen{ \frac{ n-1}{n-k}}$, and we have
\[ b \geq (n-1) \log_{|B|} \inparen{ \frac{n-1}{n-k} }. \]
This completes the proof.
\end{proof}

\begin{remark}[A simple proof of the cut-set bound for exact repair of MDS codes]
The bound~\eqref{eq:bigt}, which holds for general codes and for functional repair, has a very simple proof for linear exact repair of MDS codes.  In the language of the proof of Theorem~\ref{thm:optimal}, the number of roots of $\mathcal{P}_x(\bX)$ must be less than $n-k$ for all nonzero $x \in B^t$.  This implies that for all sets $T \subset A \setminus \inset{\alpha^*}$ of size $n-k$, we must have $\sum_{\alpha \in T} d_\alpha \geq t$, or else there would be some $x$ so that $\mathcal{P}_x$ vanishes on $T$.  Thus, averaging over all sets $T$, we obtain
\[ b = \sum_{\alpha \in A \setminus \inset{\alpha^*}} d_\alpha \geq t \inparen{ \frac{n-1}{n-k} }, \]
which is precisely \eqref{eq:bigt}.
\end{remark}

\section{Discussion}\label{sec:conclusion}

Inspired by the exact repair problem for Reed-Solomon codes, we studied a variant of the classical polynomial interpolation problem.  How many bits are needed from $\inset{ f(\alpha) \suchthat \alpha \neq \alpha^*}$ in order to recover $f(\alpha^*)$?  We have shown that this can be much smaller than the number of bits needed using standard polynomial interpolation.  We gave a characterization of the number of bits needed, in terms of another problem about polynomials.

Our results imply that, while Reed-Solomon codes are often given as an example of how \em not \em to solve the exact repair problem, in fact, with the right reconstruction algorithm, they can be optimal!  This is heartening news, since RS codes are commonly used in practice.  
More precisely, we give an optimal linear exact repair scheme when block length is $n = |F|$; to the best of our knowledge, for this level of sub-packetization, Reed-Solomon codes significantly out-perform all known schemes.   
Additionally, we give a few examples of how to use this characterization in order to come up with non-trivial repair schemes for other codes.  In particular, we give an example of a family where the set of evaluation points $A \subset F$ is much smaller than the entire field.  We also give an improved exact repair scheme for a particular $(14,10)$-GRS code used by Facebook.  

Finally, subsequent work using our framework has shown that RS codes can also perform well in the ``large-$t$"-regime, and in fact can approach the cut-set bound~\cite{YB16}.  

We conclude with a few open questions.
\begin{enumerate}
\item Given a specific RS code (that is, given a set $A \subset F$), is there an efficient algorithm which will give a (near)-optimal linear repair scheme?  Our example with the $(14,10)$-GRS code was small enough that we could do an exhaustive search, under some additional assumptions, but even for this code we still do not know the best linear repair scheme.  
\item Our scheme when $A = F$ is very efficient in terms of bandwidth, but is not very efficient in terms of the total number of bits accessed.  An inspection of the scheme reveals that most nodes will have to touch $\Omega(t)$ bits before deciding which $O(1)$ bits to return.  In practice, this is also an important concern.
How well Reed-Solomon codes can do when this is taken into account.
\item Our characterization and examples are for \em linear \em repair schemes.  How much better can one do with non-linear repair schemes?  
\end{enumerate}

\section*{Acknowledgements}
We thank Alex Dimakis for bringing this problem to our attention and for patiently answering our questions about the problem set-up, and Ankit Rawat for helpful feedback.  We also thank the Simons Institute at Berkeley for their hospitality in Spring 2015, where some of this research was done.

\bibliographystyle{alpha}
\bibliography{refs}

\appendix
\section{Table of notation}\label{app:notation}
For reference,
Table~\ref{fig:translation} gives a summary of our notation, and also a translation to common conventions from the regenerating codes literature.
Broadly speaking, we reserve some Greek letters ($\alpha, \beta$,...) for elements of the finite field $F$ and a few $(\eps, \delta)$ for small real numbers; we reserve capital Roman letters ($S, T, A$,...) for subsets of the finite field $F$; and we reserve some lower-case Roman letters ($b,t,d$,...) for integers and some $(f, p)$ for polynomials mapping $F \to F$.  We use $\bX$ as a variable in polynomials.

\begin{figure}[ht!]
\begin{tabular}{|>{\centering\arraybackslash}p{2in}|>{\centering\arraybackslash}p{1in}|>{\centering\arraybackslash}p{1.5in}|>{\centering\arraybackslash}p{1.5in}|}
\hline
Description in terms of the exact repair problem & Our name & Our notation & Common notation in regenerating codes \\
\hline
Contents of a node & Symbol & An element of $F$ & -- \\
\hline
Response from a node & Sub-symbol (or several sub-symbols) & Element(s) of $B \leq F$ & Smallest unit of subpacketization; often $\F_q$ or $\F_p$\\
\hline
Number of blocks in the original file & Message length & $k$ & $k$ \\
\hline
Number of nodes & Codeword length & $n$ & $n$ \\
\hline
Number of sub-symbols downloaded from each node & -- & We allow this to vary from node to node in our definition
& $\beta$ \\
\hline
Number of sub-symbols contained in each node & Symbol size & $t$ & $\alpha$ \\
\hline
Number of sub-symbols in the original file & & $kt$ & $\mathcal{M}$ \\
\hline
Number of sub-symbols contained in each message block & -- & $t$ & $\mathcal{M}/k$ \\
\hline
Number of sub-symbols downloaded to repair a single erased node & (Exact) repair bandwidth & $b$ & $\gamma$\\
\hline
Number of nodes accessed to repair a single erased node & (Exact) repair locality & $d$ & $d$ \\
\hline
\end{tabular}
\caption{A description of our notation and a translation to the standard notation for regenerating codes.  We consider minimum storage regenerating (MSR) codes, so the number of sub-symbols stored in a single node is the same as the number of subsymbols stored in a block of the file.}
\label{fig:translation}
\end{figure}

\section{Direction of communication}\label{app:justification}

It may seem strange that the nodes are allowed to return \em any \em (linear) function of the data stored in them.  After all, the goal is to minimize communication, and this scheme requires that we ask each remaining node for a specific (set of) functions.  This could potentially result in a lot of communication from the new replacement node to the existing nodes; meanwhile, the definition of repair bandwidth only captures communication from the existing nodes to the replacement node.  
However, while minimizing communication in both directions is obviously of interest, the regenerating codes literature has focused mostly on the communication from the existing nodes to the newcomer.    The main justification is as follows.

First, notice that we only need to communicate the identity of the failed node, $\alpha^*$, to each remaining node.  If $t$ is the dimension of $F$ over $B$, this requires $t$ sub-symbols to be sent to each remaining node.  Thus, if either (a) not too many nodes need to be contacted or (b) it is easy to broadcast the information $\alpha^*$ to many nodes, this is not an issue.   

Moreover, even if the number of nodes contacted is large compared to the dimension $t$ of $F$ over $B$, this is not a problem in practice.  Because each server is very large, and these codes are typically implemented over smaller fields, the following set-up---called ``data striping", or an ``interleaved RS code"---is common.\footnote{See~\cite{SRKR09} or \cite{RSK11} for a further description of this; we also thank Alex Dimakis for pointing out to us that this is how Reed-Solomon codes are used by Facebook in~\cite{github}.} 
We still have $n$ servers, each associated with an element $\alpha \in A$; however, instead of holding a single element of $F$, these servers hold $m$ elements of $F$.
We encode a
file consisting of $m k \log_2(|F|)$ bits as follows.  First, we break up the
file into $m$ messages in $F^k$, interpreted as $m$ polynomials $f^{(1)}, f^{(2)}, \ldots, f^{(m)}$ of degree at most $k-1$.  
Then we encode each of these messages with the Reed-Solomon code, obtaining $m$ codewords $(f^{(i)}(\alpha))_{\alpha \in A}$.
Finally, we distribute these codewords among the $n$ servers: the server corresponding to $\alpha$ holds $f^{(i)}(\alpha)$ for $i=1,\ldots,m$.
This setup is depicted in Figure~\ref{fig:practice}.

Now, suppose the server corresponding to $\alpha^*$ is erased.  We wish to set up a replacement server, and we run our exact repair scheme for each codeword $c^{(i)}$, $i=1,\ldots,m$.   Notice that the replacement server needs to regenerate at least $m$ elements of $F$, so the number of sub-symbols it must download from the remaining servers is at least $mt$.  This swamps the amount of communication going in the other direction: in this set-up, the replacement server needs to communicate $\alpha^*$ to each of the $n$ servers, which is $nt \ll mt$ sub-symbols.  

Thus, even in the case where $t$ is small compared to $n$, it is interesting to consider the one-way communication captured in the definition of the repair bandwidth. 

\begin{figure}
\begin{center}
\begin{tikzpicture}
\def\nn{5}
\def\mm{6}
\begin{scope}[xshift=-5cm]
\draw (0,0) rectangle (\mm*2,  .7 );
\node(target) at (\mm, 0) {};
\foreach \j in {1,...,3}
{
	\draw (2*\j,0) -- (2*\j,  .7);
	\node at (2*\j - 1, .35) {\footnotesize$f^{(\j)}$};
}
\draw (2*\mm-2,0) -- (2*\mm-2,0.7);
\node at (2*\mm - 1, .35) {\footnotesize$f^{(m)}$};
\node[below=0cm of target](anchor)  {\begin{minipage}{8cm} \begin{center}{Original file: $mk$ elements of $F$, interpreted as $m$ degree $k-1$ polynomials}\end{center}\end{minipage}};
\end{scope}
\begin{scope}[yshift=-9cm,scale=1.2]
\node(pt) at (-.5, \nn/2) {};
\foreach \i in {1,2,3}
{
	\draw (0,\nn-\i) rectangle (\mm, \nn-\i + .7 );
	\foreach \j in {1,...,3}
	{
		\draw (\j,\nn-\i) -- (\j,\nn-\i + .7);
		\node at (\j - 0.5,\nn-\i+.35) {\footnotesize$f^{(\j)}(\alpha_{\i})$};
	}
	\foreach \j in {\mm}
	{
		\draw (\j-1,\nn-\i) -- (\j-1,\nn-\i + .7);
		\node at (\j - 0.5, \nn-\i+.35) {\footnotesize$f^{(m)}(\alpha_{\i})$};
	}
}
\node at (\mm/2, \nn - 4 + .35) {$\cdots$};
\foreach \i in {\nn}
{
	\draw (0,\nn-\i) rectangle (\mm, \nn-\i + .7 );
	\foreach \j in {1,...,3}
	{
		\draw (\j,\nn-\i) -- (\j,\nn-\i + .7);
		\node at (\j - 0.5, \nn-\i+.35) {\footnotesize$f^{(\j)}(\alpha_{n})$};
	}
	\foreach \j in {\mm}
	{
		\draw (\j-1,\nn-\i) -- (\j-1,\nn-\i + .7);
		\node at (\j - 0.5, \nn-\i+.35) {\footnotesize$f^{(m)}(\alpha_{n})$};
	}
}
\draw [decorate,decoration={brace,amplitude=10pt},xshift=0pt,yshift=0pt]
(\mm + .5, \nn - .3) -- (\mm + .5, 0) node [black,midway,xshift=0.6cm, rotate=-90] 
{$n$ servers};
\draw [decorate,decoration={brace,amplitude=10pt},xshift=0pt,yshift=0pt]
(0, \nn) -- (\mm , \nn) node [black,midway,yshift=0.6cm] 
{$m$ symbols per server};
\end{scope}
\draw[thick,->] (anchor.west) to[out=230,in=180] node[midway,below,xshift=-1cm]{\begin{minipage}{2cm}\begin{center}Use $m$ copies of an RS code to encode and distribute on $n$ servers.\end{center}\end{minipage}} (pt);
\end{tikzpicture}
\end{center}
\caption{How we might implement an RS code for storage in practice, when the size of a server is large compared to the size of an element of $F$. 
Suppose we have an exact repair scheme for the RS codes.
When a server corresponding to $\alpha^*$ fails, the replacement server communicates ``$\alpha^*$" to each surviving server.  Then each surviving server sends the replacement server what the exact repair scheme dictates for each of the $m$ RS codewords.
In this set-up, the amount of communication from the replacement server to the existing servers (which is at most $nt$) is negligible compared to the amount of communication (which is $mb$) from the existing nodes to the replacement node, since $m$ is much larger than all other parameters.}
\label{fig:practice}
\end{figure}

\section{Extended literature review}\label{app:litreview}
In Table~\ref{fig:exactmsr}, we give an extended literature review summarizing work on MDS codes for the exact repair problem, and including our results for RS codes in comparison.

\begin{figure}
\footnotesize
\begin{center} Relevant parameter regime: $t \leq n-k$ \end{center}
\begin{tabular}{|>{\centering\arraybackslash}p{.7in}|>{\centering\arraybackslash}p{.65in}|>{\centering\arraybackslash}p{.55in}|>{\centering\arraybackslash}p{.8in}|>{\centering\arraybackslash}p{.8in}|>{\centering\arraybackslash}p{1in}|>{\centering\arraybackslash}p{.8in}|}
\hline
Paper &  $k,n$ & No. nodes accessed $d$ & No. Subsymbols per symbol $t$ & Subsymbol size $|B|$ & Repair bandwidth $b$ & Notes \\
\hline\hline
\rowcolor{yellow!20}
Naive lower bound & -- & -- & -- & -- & $b \geq k + t -1$ & \\
\hline
\cite{WD09} & $n \geq k+2$ & $d = k + t - 1$ & $t \leq n-k$ & $|B|$ may be as large as $n^k$ & $b = (k-1)t + 1$ & \\
\hline
\cite{SRKR09} & $n \geq k + 2$  & $d = k + t - 1$ & $t \geq k - 2$ & $|B| \geq t + n - k $ & $b = k + t - 1$ & For exact repair of systematic nodes only.\\
\hline
\hline
\rowcolor{yellow!20}
Theorem~\ref{thm:optimal} (holds for RS codes with any evalation points).
& $k = (1 -\eps)n$ & -- & -- & -- & $b \geq$ \newline $(n-1)\log_2(\frac{1}{\eps}) - O(1) $ & Lower bound for linear repair schemes for RS codes. \\
\hline
Theorem~\ref{thm:fulllen} (RS codes with $A=F$) & $k = (1-1/|B|)n$  & $d = n-1$ & $t = \log_{|B|}(n)$ & $B \leq F$ subfield & $b = (n-1)$& \\
\hline
Corollary~\ref{cor:binary} (RS codes with $A=F$) & $k = (1-\eps)n$  & $d = n-1$ & $t = \log_{2}(n)$ & $|B| = 2$  & $b = (n-1)\log_2(1/\eps)$& Provided $\log_2(1/\eps) | t$ \\
\hline
\end{tabular}

\begin{center} Relevant parameter regime: $t \geq n-k$ \end{center}
\begin{tabular}{|>{\centering\arraybackslash}p{.7in}|>{\centering\arraybackslash}p{.65in}|>{\centering\arraybackslash}p{.55in}|>{\centering\arraybackslash}p{.8in}|>{\centering\arraybackslash}p{.8in}|>{\centering\arraybackslash}p{1in}|>{\centering\arraybackslash}p{.8in}|}
\hline
Paper &  $k,n$ & No. nodes accessed $d$ & No. Subsymbols per symbol $t$ & Subsymbol size $|B|$ & Repair bandwidth $b$ & Notes \\
\hline\hline
\rowcolor{yellow!20}
\cite{DGWW10}  & -- & -- & -- & -- & $b \geq \frac{ t d }{d + 1-k}$ & Lower bound for functional repair\\
\hline
\cite{SR10}  & $n \leq 2k$  & $d =n-1 \geq 2k - 1$ & $t = d-k+1 \geq k$ & $|B| = n-k$ & $b = \frac{ t d }{d + 1 - k}$ & \\
\hline
\cite{RSK11} & $n \leq 2k + 1$  & $d \geq 2k - 2$ & $t = d - k + 1 \geq k-1 $ & $|B| = n-k$ & $b = \frac{td}{d + 1 - k}$ & \\
\hline
\cite{SR10b, CJM13} & -- & -- & $(d+1-k) \cdot \Delta^{c}$ & $|B|$ depends on $\Delta$ & $b \leq d(\Delta + 1)^{c}$ (which is about $\frac{ dt}{d + 1 - k}$ for large $\Delta$)  & $c = c(n,k) = (n-k)(k-1)$\\
\hline\hline
Theorem~\ref{thm:bigt} (RS codes with a specific $A \subset F$) & $k \leq n-2$ & $d = n-1$ & Any $t$& $|B| = 2$ & $b = kt \inparen{\frac{3}{4} - o(1) } $ & The naive result is $kt$ and the lower bound \eqref{eq:bigt} is $kt/2$, so this is non-trivial but not optimal. \\
\hline
\end{tabular}
\normalsize
\caption{\textbf{Constructions and bounds for exact repair of general MDS codes.}
Above we summarize results for the exact repair problem for MDS codes, separated by parameter regime.  The yellow rows indicate lower bounds, and the white rows are upper bounds.
We note that there are several results known in the large-$t$ setting ($t$ exponential in $n$) for exact recovery of systematic nodes only and for specific rates, like $n = k+2$~\cite{CHL11,TWB13,PDC13}.  We omit the quantitative details of these results in our table, and refer the reader to~\cite{wiki} for a more extensive literature review in these special cases.
}\label{fig:exactmsr}
\end{figure}

\section{Explicit construction for Facebook code}\label{app:hdfs}
In this section, we present the polynomials returned from our search, which give a linear exact recovery scheme for the $(14,10)$-GRS code used in the Facebook Hadoop Analytics cluster.  The code which produced these polynomials can be found at \url{http://sites.google.com/site/marywootters/exhaust_FB.sage}.
\begin{figure}[h]
\begin{center}
\begin{tabular}{|c|cc|c|}
\hline
$ \alpha^*$   & Polynomials & &Bandwidth (in bits) for $ \alpha^*$   \\ 
\hline$ \zeta^{0}$ & $  (\bX + \zeta^{1})(\bX + \zeta^{2})(\bX + \zeta^{5})$ & $  (\bX + \zeta^{3})(\bX + \zeta^{8})(\bX + \zeta^{6})$ & 64\\ \hline 
$ \zeta^{1}$ & $  (\bX + \zeta^{2})(\bX + \zeta^{3})(\bX + \zeta^{6})$ & $  (\bX + \zeta^{4})(\bX + \zeta^{9})(\bX + \zeta^{7})$ & 64\\ \hline 
$ \zeta^{2}$ & $  (\bX + \zeta^{3})(\bX + \zeta^{9})(\bX + \zeta^{6})$ & $  (\bX + \zeta^{3})(\bX + \zeta^{13})(\bX + \zeta^{12})$ & 60\\ \hline 
$ \zeta^{3}$ & $  (\bX + \zeta^{2})(\bX + \zeta^{9})(\bX + \zeta^{6})$ & $  (\bX + \zeta^{2})(\bX + \zeta^{13})(\bX + \zeta^{12})$ & 60\\ \hline 
$ \zeta^{4}$ & $  (\bX + \zeta^{2})(\bX + \zeta^{9})(\bX + \zeta^{6})$ & $  (\bX + \zeta^{2})(\bX + \zeta^{13})(\bX + \zeta^{12})$ & 60\\ \hline 
$ \zeta^{5}$ & $  (\bX + \zeta^{1})(\bX + \zeta^{3})(\bX + \zeta^{9})$ & $  (\bX + \zeta^{3})(\bX + \zeta^{4})(\bX + \zeta^{11})$ & 64\\ \hline 
$ \zeta^{6}$ & $  (\bX + \zeta^{1})(\bX + \zeta^{2})(\bX + \zeta^{10})$ & $  (\bX + \zeta^{1})(\bX + \zeta^{5})(\bX + \zeta^{12})$ & 64\\ \hline 
$ \zeta^{7}$ & $  (\bX + \zeta^{1})(\bX + \zeta^{2})(\bX + \zeta^{8})$ & $  (\bX + \zeta^{1})(\bX + \zeta^{6})(\bX + \zeta^{12})$ & 64\\ \hline 
$ \zeta^{8}$ & $  (\bX + \zeta^{2})(\bX + \zeta^{9})(\bX + \zeta^{6})$ & $  (\bX + \zeta^{2})(\bX + \zeta^{13})(\bX + \zeta^{12})$ & 60\\ \hline 
$ \zeta^{9}$ & $  (\bX + \zeta^{1})(\bX + \zeta^{2})(\bX + \zeta^{5})$ & $  (\bX + \zeta^{3})(\bX + \zeta^{8})(\bX + \zeta^{6})$ & 64\\ \hline 
$ \zeta^{10}$ & $  (\bX + \zeta^{1})(\bX + \zeta^{2})(\bX + \zeta^{5})$ & $  (\bX + \zeta^{1})(\bX + \zeta^{6})(\bX + \zeta^{13})$ & 64\\ \hline 
$ \zeta^{11}$ & $  (\bX + \zeta^{2})(\bX + \zeta^{9})(\bX + \zeta^{6})$ & $  (\bX + \zeta^{2})(\bX + \zeta^{13})(\bX + \zeta^{12})$ & 60\\ \hline 
$ \zeta^{12}$ & $  (\bX + \zeta^{1})(\bX + \zeta^{2})(\bX + \zeta^{5})$ & $  (\bX + \zeta^{1})(\bX + \zeta^{6})(\bX + \zeta^{13})$ & 64\\ \hline 
$ \zeta^{13}$ & $  (\bX + \zeta^{1})(\bX + \zeta^{2})(\bX + \zeta^{5})$ & $  (\bX + \zeta^{3})(\bX + \zeta^{8})(\bX + \zeta^{6})$ & 64\\ \hline 
\end{tabular}
\end{center}
\caption{Polynomials which give an exact repair scheme for the $(14,10)$-GRS code used in~\cite{HDFS}.}\label{fig:hdfstable}
\end{figure}
Table~\ref{fig:hdfstable} gives an exact repair scheme for $RS(A,10)$ with $A = \inset{ \zeta^i : 0 \leq i \leq 13 } \subset GF(2^8)$ over $B = GF(2^4)$.  As in Theorem~\ref{thm:polynomials}, such a scheme is given by two cubic polynomials for each choice of $\alpha^*$.  We have also listed the bandwidth
\[ b(\alpha^*) \cdot \log_2(|B|) = 4 \cdot \sum_{\alpha \in A \setminus \inset{\alpha^*} } \dim_B \inset{ p(\alpha) \suchthat p \in \mathcal{P}(\alpha^*) } \]
for each choice of $\alpha^*$.  The total bandwidth for the scheme is given by the maximum of these which is $64$ bits.  As discussed in Section~\ref{sec:hdfs}, an exact repair scheme for $RS(A,10)$ gives an exact repair scheme for the code $GRS(A,10,\lambda)$ used in~\cite{xoringelephants,scalarmds,HDFS}.
The previous best result for this code, from~\cite{scalarmds} had bandwidth $65$ bits and only recovered the $10$ systematic nodes.

\end{document}